\documentclass[11pt,onecolumn]{IEEEtran}


\usepackage{etoolbox}
\usepackage{cite}
\newcommand{\FigDat}[2]{
\ifstrequal{#1}{SIR_Hist}{
\begin{figure}[t]
        \begin{center}\includegraphics[scale=#2]{fig_SIR_hist}\end{center}
    \caption{Histogram of the normalized SIR, \eqref{d:Normalized SIR}, using a cluster of the $K=4$ nearest BSs and varying number of antennas per BS ($L$). The figure also shows the normalized SIR predicted by Lemma \ref{Lemma:SIR}.}
    \label{f:SIR_Hist_fig}
\end{figure}
}{}
\ifstrequal{#1}{SIR_Line}{
\begin{figure}[t]
        \begin{center}\includegraphics[scale=#2]{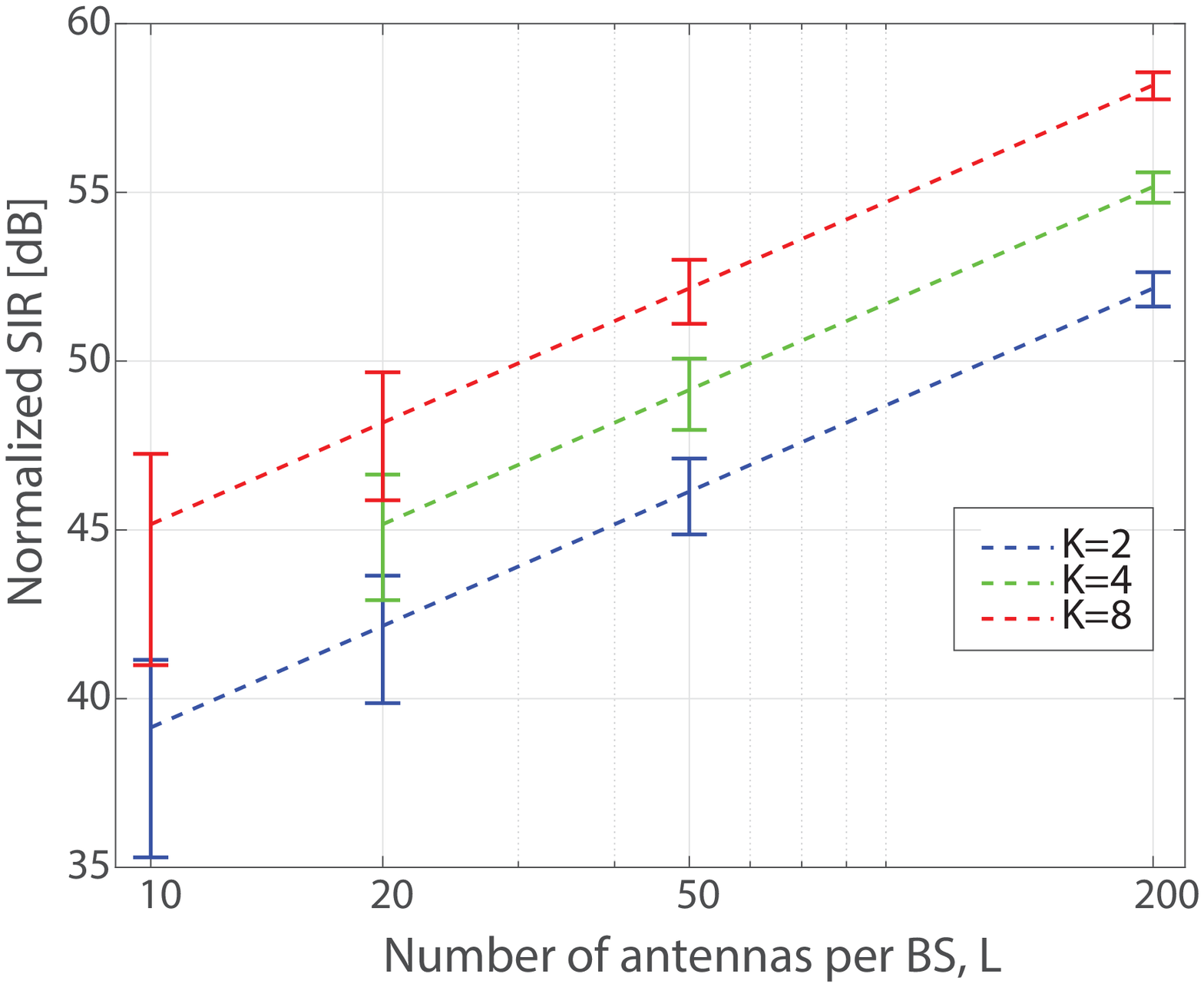}\end{center}
    \caption{The normalized SIR, \eqref{d:Normalized SIR}, as a function of the number of antennas per BS for various cluster sizes ($K$). The error-bars show the $\pm 1$ Std through MC simulations while dashed lines show the predicted values by Theorem \ref{Lemma:SIR}.}
    \label{f:SIR_Line_fig}
\end{figure}
}{}
\ifstrequal{#1}{fig_rate}{
\begin{figure}[t]
        \begin{center}\includegraphics[scale=#2]{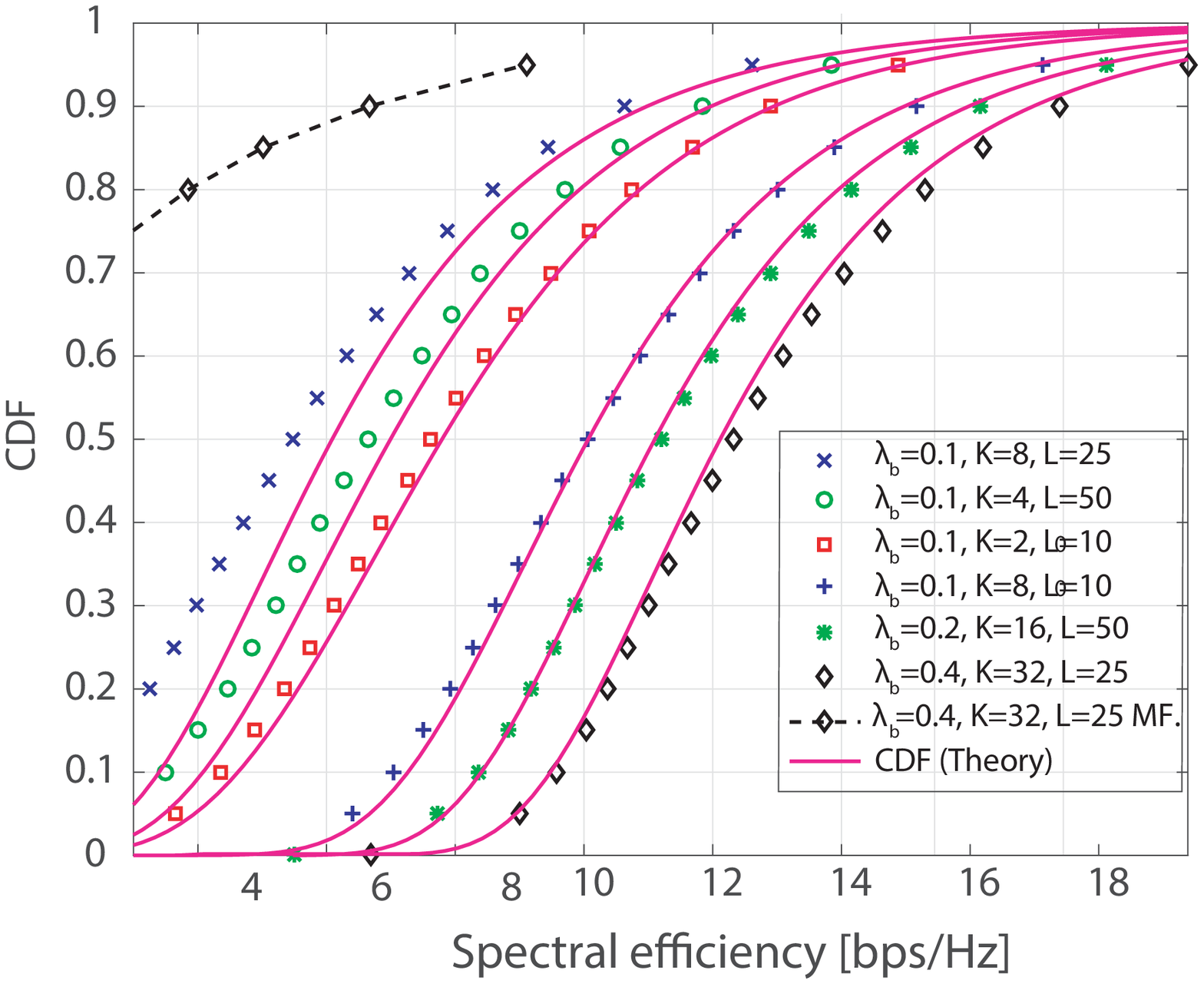}\end{center}
    \caption{The CDF of the spectral efficiency in a network with varying parameters. Solid lines depict the theoretical result of Theorem \ref{Theorem:Main}. Markers depict simulation results. The dashed line indicates the CDF of a matched-filter receiver for reference.}
    \label{f:rate_fig}
\end{figure}
}{}

\ifstrequal{#1}{fig_alpha}{
\begin{figure}[t]
        \begin{center}\includegraphics[scale=#2]{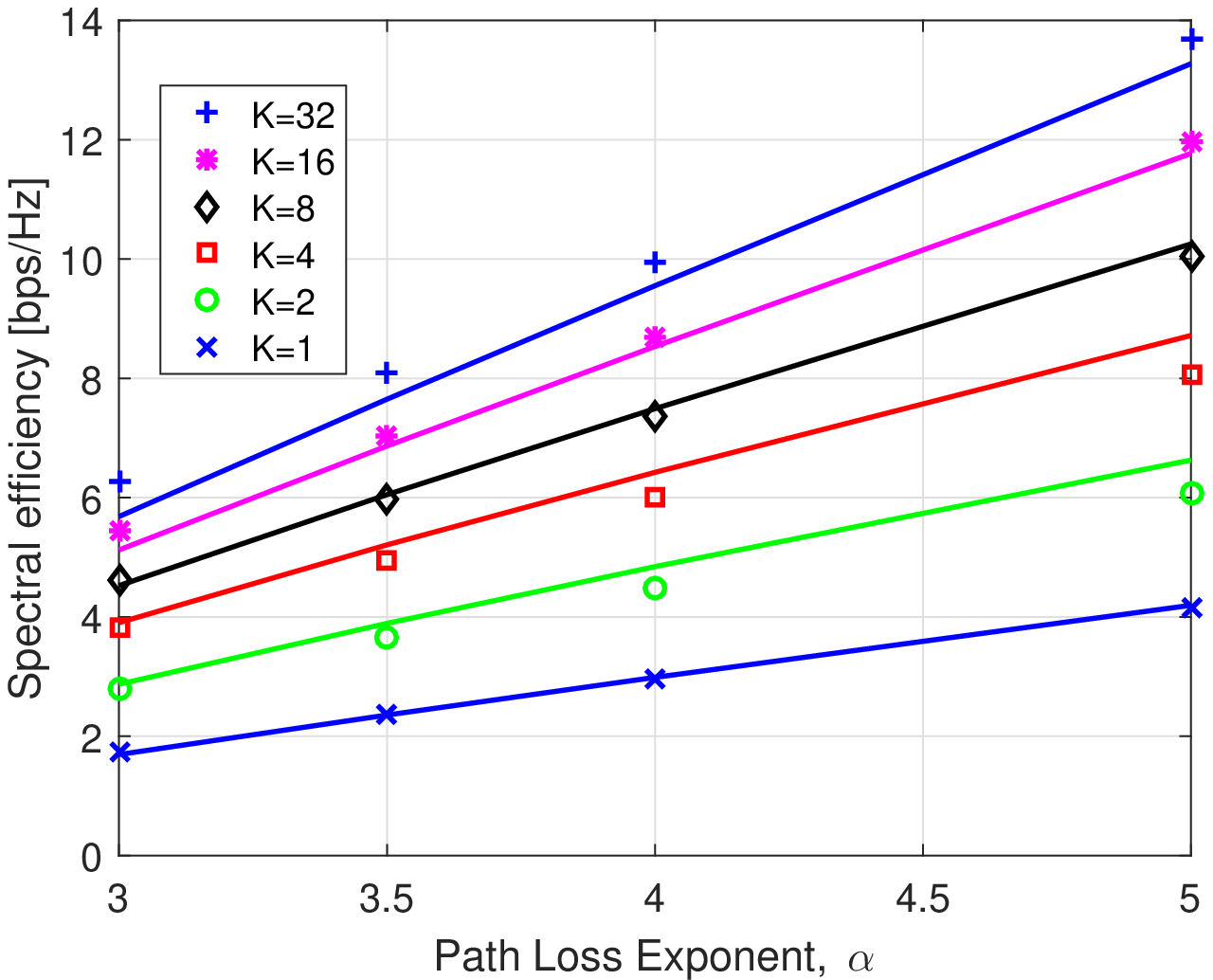}\end{center}
    \caption{Normalized spectral efficiency vs path-loss exponent. Dashed lines represent the results of Theorem \ref{Theorem:Main}. Markers depict simulation results.}
    \label{f:rate_alpha_fig}
\end{figure}
}{}

\ifstrequal{#1}{fig_hex_cell_edge}{
\begin{figure}[t]
        \begin{center}\includegraphics[scale=#2]{hex_cel_spec_eff_cell_edge_vs_L}\end{center}
    \caption{Spectral efficiency vs. number of antennas per base station for mobiles at the cell edge of a hexagonal-cell network. Dashed lines represent the results of Theorem \ref{Theorem:Main}. Solid lines depict simulated mean spectral efficiencies with error bars denoting $\pm1$ standard deviation from the mean.}
    \label{f:rate_alpha_fig}
\end{figure}
}{}

\ifstrequal{#1}{fig_hex_cell_edge_alpha}{
\begin{figure}[t]
        \begin{center}\includegraphics[scale=#2]{fig_SIR_linear_rev3}\end{center}
    \caption{Spectral efficiency vs. path-loss exponent for mobiles at the cell edge of a hexagonal-cell network. The lines represent the results of Theorem \ref{Theorem:Main} and markers depict simulated mean spectral efficiencies.}
    \label{f:rate_alpha_fig}
\end{figure}
}{}

}


\pagestyle{empty}

\usepackage{amssymb}
\usepackage{pstricks}
\usepackage{amsmath}
\usepackage[dvips]{graphicx}
\usepackage{colortbl}
\usepackage{enumerate}
\usepackage{ifthen}
\newtheorem{theorem}{Theorem}

\newtheorem{lemma}{Lemma}

\setcounter{page}{100}
\IEEEoverridecommandlockouts

\newcommand{\Ind}[1]{
1_{\left\{#1\right\}}
}

\newcommand{\wpOne} {$  w. p.\,1 \text{ }$}

\begin{document}

\title{Uplink performance of multi-antenna cellular networks with co-operative base stations and user-centric clustering}

\author{Siddhartan Govindasamy, Itsik Bergel\thanks{\noindent Portions of this work has appeared in \cite{BSCoopICC}, Siddhartan Govindasamy (siddgov@alum.mit.edu) is with the F. W. Olin College of Engineering, Needham, MA, USA. The work of S. Govindasamy was supported in part by the United States National Science Foundation under Grant CCF-1117218.} \thanks{\noindent Itsik Bergel (bergeli@biu.ac.il) is with the Faculty of Engineering, Bar Ilan University, Ramat-Gan, Israel.}}

  \maketitle

\begin{abstract}
We consider a user-centric co-operative cellular network, where base stations (BSs) close to a mobile co-operate to detect its signal using a (joint) linear minimum-mean-square-error receiver.  The BSs are at arbitrary positions and mobiles are modeled as a planar Poisson Point Process (PPP). Combining stochastic geometry and infinite-random-matrix theory, we derive a simple expression for the spectral efficiency of this complex system as the number of antennas grows large.  This framework is applied to BS locations from PPPs and hexagonal grids, and are validated through Monte Carlo simulations. The results reveal the influence of tangible system parameters such as mobile and base-station densities, number of antennas per BS, and number of co-operating BSs on achievable spectral efficiencies. Among other insights, we find that for a given BS density and a constraint on the total number of co-operating antennas, all co-operating antennas should be located at a single BS. On the other hand, in our asymptotic regime, for the same number of co-operating antennas, if the network is limited by the area density of antennas, then the number of co-operating BSs should be increased with fewer antennas per BS.
\end{abstract}

\begin{keywords}
Cellular Networks, MIMO, Antenna Arrays, Stochastic Geometry, Poisson Point Process
\end{keywords}

\section{Introduction}

Co-operative processing has attracted much attention in the analysis of cellular networks, both for transmission and for reception. Such systems have been proposed in the context of cloud-radio access networks (CRAN) (see e.g. \cite{checko2015cloud}, \cite{irmer2011coordinated}) and distributed massive multiple-input multiple-output (MIMO) systems (see e.g. \cite{DistribMassiveBjornson,bjornson2015massive,truong2013viability,yin2013coordinated}), and can offer an increase in spatial diversity as signals from the antennas on multiple BSs can be used for detection. Most approaches analyzed in the literature use static clustering of base-stations (BSs) for joint processing, which
creates `super-cells' in which the clustered BSs function as a single BS with greater capabilities. While this approach provides increased spatial diversity, it  still has a problem with users that are located at the edges of the new `super-cells'. A more flexible approach is to group the clusters dynamically, such that the signals to/from the $K$ closest BSs to a given mobile user are jointly processed. Such an approach, which is a form of user-centric clustering, can completely resolve the cell-edge issue, at the cost of greater complexity in coordination.

This work aims at analyzing the uplink of a cellular network with user-centric clustering and multi-antenna BSs, where the the BSs are at arbitrary positions on a plane and the mobiles are assumed to be distributed according to a homogenous PPP (HPPP).
Note that HPPP has originally gained popularity for the modeling of ad-hoc wireless networks (e.g., \cite{HaenggiJSAC,win2009mathematical,jindal2011multi,george2015upper,rajanna2015performance}), where a uniform distribution of users  is quite intuitive. Later on, it was also used for the modeling of BSs in cellular networks (e.g., \cite{AndrewsCellular,novlan2012analytical,huang2013analytical,guo2013spatial,blaszczyszyn2013using,zhong2017heterogeneous, blaszczyszyn2015wireless,}), where it is usually compared to the hexagonal-cell model. This work mainly focuses on modeling of the mobile locations as HPPP, which is a natural extension to the intuitive assumption that the mobiles are uniformly distributed. Modeling the spatial distribution of BSs and mobiles is a powerful technique to analyze wireless cellular networks and  ``lead to remarkably precise characterizations" \cite{george2017ergodic}, and as such are very useful in the analysis of wireless networks. This work supports two of the popular models for BS locations assumed in the literature, namely the hexagonal grid and the HPPP models.
\par

We assume that the $K$ BSs closest to a given mobile cooperate to decode its signal using the linear Minimum-Mean-Square Error (MMSE) estimator, which is the optimal linear estimator to maximize the Signal-to-Interference Ratio (SIR) of the post-processed signal, under the assumption that the thermal noise is negligible.  The resulting asymptotic expressions for the achievable spectral efficiency are simple, and reveal  the dependencies between user density $\lambda$,  BS density $\lambda_b$, the number of co-operating BSs $K$, and the number of antennas per BS $L$, which  enables designers to understand the tradeoffs of increasing system resources (e.g. $K$, $L$, or $\lambda_b$)  to handle increasing densities of mobiles. Note that due to the optimality of MMSE estimator,  this result also provides an asymptotic upper bound on the performance that can be expected from all linear receivers.  Note  that for finite $L$, these bounds are approximate. We further apply these results to two commonly used models for spatial BS distribution, namely with BSs distributed either as PPPs or on a hexagonal grid, thereby illustrating the utility of our framework to different scenarios.

As was pointed out  in \cite{zhu2016stochastic}, most works on user-centric  base-station clustering have relied on numerical simulations. Analytical results on such systems are limited in the literature due to a number of technical challenges as described later in this section. Understanding the potential performance benefits of such a system is important as base-station clustering can provide very high data rates, but the cost associated with jointly processing signals received on multiple, spatially distributed BSs is potentially  very high. As such, analytical results which indicate the dependencies of different system parameters on achievable data rates are very useful.

While in the context of massive MIMO systems, it was shown that the matched filter (MF) receiver is asymptotically optimal \cite{larsson2014massive},
there is a wide range of realistic system parameters for which the MMSE receiver performs significantly better than the MF receiver \cite{hoydis2013massive}. Therefore, for practical (not so) massive MIMO systems, analyzing the MMSE receiver is very useful. Additionally,  we assume that channel information required to construct the MMSE receiver is known accurately in this work. As such these results are achievable as long as the number of antennas is not extremely large (where channel estimation and pilot contamination become significant challenges), and moreover, this result is useful as an upper bound on the  performance in situations where channel estimation errors are significant.

Analyzing the uplink of co-operative BS systems is complicated by the fact that the signals from a given mobile experience different path losses to the arrays of the co-operating BSs. While a number of works have considered the multi-antenna MMSE receiver in spatially distributed ad-hoc and  non-cooperative cellular networks (e.g., \cite{GagnonMMSE,jindal2011multi,GovCellularNetworks}), these works rely  on the fact that channel matrices for their respective models can be factored into the product a matrix representing fast fading, and a diagonal matrix containing the square-root of path-losses. Such a factorization is not possible when the signal from a given mobile experience different path-losses to the receiving array, as would be the case for a co-operative system. We address this complexity using an asymptotic analysis as the number of antennas per BS grows large. Our findings are supported by Monte Carlo simulations which show that the asymptotic results are useful even when the number of antennas is moderately large ($\sim 50$). Note that while this number is large for traditional base stations, in the near future, it is expected that base stations will be equipped with hundreds of antennas. In fact, field trials with 64 and 100 antenna base stations have already been completed (\cite{massiveMIMOWhitePaper2017,malkowsky2017world}).

A number of works have analyzed  base-station clustering systems in networks where the spatial distribution of users and BSs are considered. The majority of these works focus on the downlink, and include systems with both fixed clusters, (e.g., \cite{huang2013analytical,park2016cooperative}) and user-centric clustering (e.g. \cite{tanbourgi2014tractable,nie2016user,nigam2015spatiotemporal}). As observed in \cite{zhu2016stochastic}, most works on user centric-clustering have used  numerical simulations, e.g., \cite{papadogiannis2008dynamic} which considered zero-forcing beamformers in the uplink,  and \cite{ng2010linear} and \cite{dai2014sparse} which considered a number of precoding strategies for the downlink. In \cite{tanbourgi2014tractable}, \cite{garcia2014coordinated}, and \cite{ghods2016throughput}, the downlink of user-centric BS co-operation systems are considered analytically. Reference \cite{kim2017user} considered uplink and downlink transmissions with single antenna BSs, and 2 cooperating BSs detecting signals from a given mobile. Recently, \cite{zhu2016stochastic}  analyzed both the uplink and downlink of multi-antenna distributed processing systems using zero-forcing beamformers, and user-centric clustering. In \cite{zhu2016stochastic}, the significant complexities in analyzing  the uplink of multi-antenna base-station cooperation systems are handled using a number of approximations, e.g., on the distribution of the interference. In another recent work \cite{mohammadi2016full}, the authors consider  uplink and downlink transmissions with BSs distributed as a PPP.   For the case of uplink transmissions with the suboptimal matched-filter and zero-forcing receivers, the authors remark that statistically characterizing the SINR ``appears intractable", and use simulations to characterize this case, as well as providing an approximation for the average uplink rate. In \cite{mohammadi2016uplink}, the authors consider the optimal (i.e. MMSE), matched-filter and zero-forcing receivers using simulations, remarking that statistically characterizing the rate on the uplink ``seems to be an intractable task".

In this work on the other hand, we consider the optimal linear receiver, and derive a simple expression for the uplink spectral efficiency on a multi-antenna link which is exact asymptotically. We also provide tight approximations for the Cumulative Distribution Function (CDF) of the spectral efficiency when BSs are distributed according to a PPP, and asymptotically exact expressions for the spectral efficiency of mobiles at the cell edge of a hexagonal-grid BS deployment. The main contributions of this work can thus be summarized as follows:
\begin{itemize}
\item We provide for the first time, a closed form expression for the spectral efficiency of a multiantenna uplink in a co-operative cellular network using the optimal linear receiver.
\item This result enables us to to predict the performance of BS cooperation as a function of tangible system parameters such as base station and mobile density, number of antennas per base station and number of cooperating base stations.


\item We use our general framework to characterize the spectral efficiency of random links in a Poisson-cell network, and the cell-edge user in a hexagonal-cell network.

\end{itemize}

\section{System Model}

\begin{figure}
\begin{center}
\includegraphics[width = 3.25in]{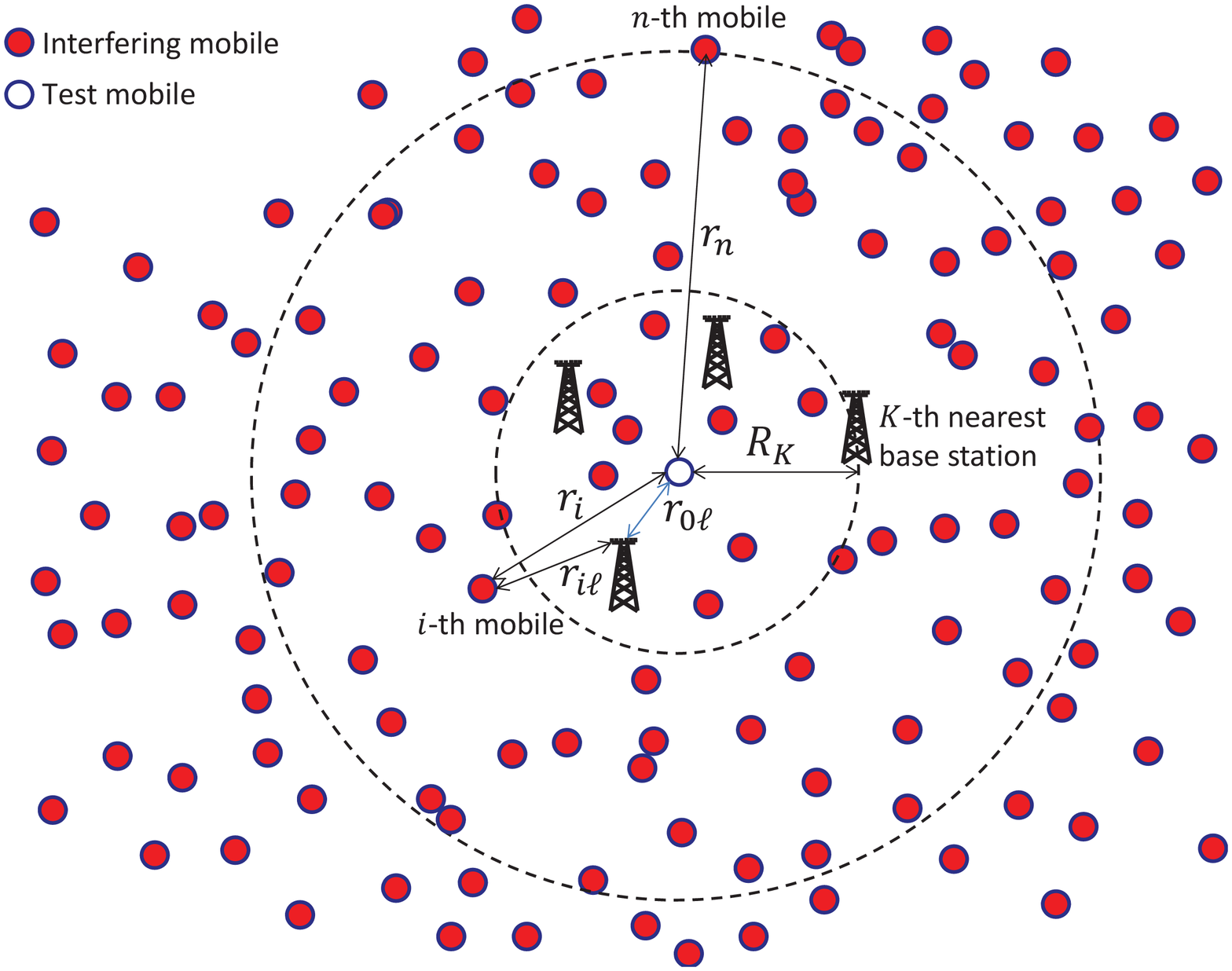}
\caption{Illustration of network with base-station cooperation.}
\label{Fig:BSCoopNetwork}
\end{center}
\end{figure}

Consider a network of BSs  at arbitrary locations on a plane, with no BS at the origin. In this work, we analyze the  link between a mobile transmitter located at the origin, and an MMSE receiver applied to signals received on the antennas of the $K$  base-stations closest to the origin. Note that later in this work, we shall apply the results derived here to two spatial BS distribution models, namely when the BSs are distributed according to an HPPP, and on a hexagonal grid.

When the BS locations are the result of a HPPP, the analyzed link is a typical link, i.e. the statistical properties of the analyzed link are equivalent to any random link in the network (for example, see the discussion on typical points in PPPs in \cite{HaenggiJSAC}). Analyzing  typical links is a widely used approach in the stochastic geometry literature. For the hexagonal-cell model, we offset the hexagonal grid such that the mobile at the origin is at the edge of 3 cells. Thus, the analysis of the link at the origin gives the performance of a cell-edge mobile.

The test mobile at the origin shall be labeled as mobile-0, whereas all other mobiles, will be co-channel interferers to the test mobile.  The other mobiles transmit simultaneously and in the same frequency band as the test mobile. The signals from the test mobile are separated spatially.  For simplicity of notation, we order both BS and mobiles according to their distance from the test mobile (although a different (random) ordering will be used in portions of the proof). Let $r_i$ denote the distance from the test mobile to the $i$-th mobile and $r_{i,j}$ denote the distance between the $i$-th mobile and the $j$-th BS (thus, our ordering results with $r_1\le r_2 \le r_3 \ldots$ and $r_{0,1}\le r_{0,2} \le r_{0,3} \ldots$).   Fig. \ref{Fig:BSCoopNetwork} illustrates this network, where we have only shown 4 of the BSs. We assume that each BS has $L$ antennas.

Ignoring the effects of thermal noise, at a given sampling time, stacking the received signals at the $L$ antennas of the $K$ co-operating BSs  into a vector $\mathbf{y}\in \mathbb{C}^{LK\times 1}$, we have
\begin{align}
\mathbf{y} = \mathbf{h}_0x + \sum_{i = 1}^\infty\mathbf{h}_i x_i\,
\end{align}
where $x_i$ is the zero-mean, unit variance transmitted symbol from the $i$-th mobile and
\begin{align}\label{e: def hi}
\mathbf{h}_i = \left(g_{i,1}\cdot r_{i, 1}^{-\alpha/2},\cdots,  g_{i,L}\cdot r_{i ,1}^{-\alpha/2},  g_{i,L+1}\cdot r_{i, 2}^{-\alpha/2},\cdots,  g_{i,2L}\cdot r_{i, 2}^{-\alpha/2}, \cdots,  g_{i,LK}\cdot r_{i, K}^{-\alpha/2}  \right)^T\,.
\end{align}
Here  $\alpha > 2$ is the path-loss exponent and $g_{i,k}$ are independent and identically distributed (i.i.d.) $\mathcal{CN}(0,1)$ random variables that represent fast fading, where the notation $\mathcal{CN}(0,1)$ indicates a zero-mean, circularly symmetric, unit-variance complex Gaussian random variable. Thus, $\mathbf{h}_i$ is a vector constructed by stacking the channel vectors between the $i$-th mobile and each of the $K$ BSs co-operating to detect the signal from the test mobile.

To avoid the use of matrices with infinite dimension, we first consider the subsystem with only the first $n$ mobiles.
The received signal in this subsystem is
\begin{align}
\mathbf{y}[n] = \mathbf{h}_0x + \sum_{i = 1}^n\mathbf{h}_i x_i.
\end{align}
Considering this $n$-mobiles subsystem, the symbol transmitted by the test mobile is estimated  using a linear MMSE estimator:
\begin{align}
\mathbf{c}[n] = \beta\left(\mathbf{H}[n]\mathbf{H}^\dagger[n]\right)^{-1}\mathbf{h}_0\,, \label{Eqn:MMSEFilt}
\end{align}
where $\mathbf{H}[n] = \left[\mathbf{h}_1 | \cdots | \mathbf{h}_n \right]$, and $\beta$ is some scale factor that does not impact the SIR as it scales the interference and signal by the same value. Note that the expression for the MMSE estimator above does not include thermal noise as our focus is on the interference-limited regime.  Thus, the SIR of the post-processed uplink signal from the test mobile is  given by
\begin{align}\label{e: SIR finite}
\mbox{SIR}[n]&=\mathbf{h}_0^\dagger \left(\mathbf{H}[n]\mathbf{H}^\dagger[n]\right)^{-1}\mathbf{h}_0.
\end{align}

Note that the matrix $\mathbf{H}[n]\mathbf{H}^\dagger[n]$ in  (\ref{e: SIR finite}) is invertible with high probability for $n>K L$. Thus, in the following, we limit the discussion to $n>K L$.  For notational convenience, we shall  define $N = K L$, which is the total number of antennas used to detect the signal from the test mobile. Note that $\mbox{SIR}[n]$ is used in the proof of the main results by taking both $n$ and $L$ to infinity.
In the proof of the main theorems, it is then shown that the final result is equivalent to $n$ being infinite (since the mobiles are from a PPP), and then taking $L$ to infinity.

The matrix $\left(\mathbf{H}[n]\mathbf{H}^\dagger[n]\right)^{-1}$ is positive definite and monotonically decreasing with $n$ (in the positive definite sense). Hence, the limit $\lim_{n\rightarrow\infty} \left(\mathbf{H}[n]\mathbf{H}^\dagger[n]\right)^{-1}$ exists, and we can define the estimator and the SIR of the original system with all the mobiles as follows
\begin{align}
\mathbf{c} &= \lim_{n\rightarrow\infty} \mathbf{c}[n]\\
\mbox{SIR} &= \lim_{n\rightarrow\infty} \mbox{SIR}[n]
\end{align}
respectively.
Note that the SIR is a random variable that depends on the locations of the mobiles and on the channel fading. Moreover, by assuming that all mobiles in the network use Gaussian codebooks, we define the spectral efficiency of the system as
\begin{align}
\eta = \log_2(1+\mbox{SIR})
\end{align}

For notational convenience, and to help interpret our results in terms of physical properties, we define   the total average power received from the test mobile by all co-operating BSs as $\mathbf{P}_K =L\sum_{k = 1}^K r_{0,k}^{-\alpha}$. In the first part of this paper, this is a deterministic quantity as the BS locations are fixed. Later, we will remove this conditioning making $\mathbf{P}_K$ random.

We study  the SIR and spectral efficiency in the asymptotic regime as $L\to\infty$. This approach provides insight into the scaling behavior of these quantities as $L$ increases, but more importantly, it provides approximations for the spectral efficiency and SIR when $L$ is fixed, but large. 

\section{ SIR and Spectral Efficiency for Arbitrary Base station Locations}
\subsection{Convergence of the normalized SIR} \label{sec:SIRLink}

In interference-limited systems, the SIR is an important quantity that determines link quality. In the following theorem, we show that with appropriate normalization, the SIR converges to a deterministic quantity as the number of antennas per BS grows large.
\begin{theorem}\label{Lemma:SIR}
The SIR in the uplink of the test mobile, satisfies the following in probability:
\begin{align}\label{e: main theorem sir}
\lim_{L\to\infty}\frac{\mbox{SIR}}{(KL)^{\alpha/2-1}\mathbf{P}_K }=\left[\frac{\,\alpha}{2\pi^2\lambda}\sin\left(\frac{2\pi}{\alpha}\right)\right]^{\frac{\alpha}{2}}
\end{align}
\end{theorem}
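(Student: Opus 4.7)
My plan is to first linearize the randomness by factoring each channel vector as $\mathbf{h}_i = \mathbf{D}_i \mathbf{g}_i$, where $\mathbf{D}_i = \mathrm{diag}(r_{i,1}^{-\alpha/2}\mathbf{I}_L,\ldots,r_{i,K}^{-\alpha/2}\mathbf{I}_L)$ is the block-diagonal path-loss matrix and $\mathbf{g}_i \sim \mathcal{CN}(\mathbf{0},\mathbf{I}_{KL})$ collects the i.i.d.\ fast-fading coefficients. The SIR then takes the separable form $\mbox{SIR}[n] = \mathbf{g}_0^\dagger \mathbf{D}_0 \mathbf{A}^{-1}[n]\mathbf{D}_0 \mathbf{g}_0$ with $\mathbf{A}[n]=\sum_{i=1}^n \mathbf{D}_i\mathbf{g}_i\mathbf{g}_i^\dagger\mathbf{D}_i$. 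Since $\mathbf{g}_0$ is independent of $\mathbf{A}[n]$, I would invoke a Bai--Silverstein-type trace lemma to replace this quadratic form by the deterministic trace $\mathrm{tr}(\mathbf{D}_0^2\mathbf{A}^{-1}[n])$ up to a vanishing relative error. The block-diagonal form of $\mathbf{D}_0$ then splits the trace as $\sum_{k=1}^K r_{0,k}^{-\alpha}\,\mathrm{tr}([\mathbf{A}^{-1}[n]]_{kk})$, which cleanly separates the signal-side geometry (the $r_{0,k}$'s, encoded in $\mathbf{P}_K$) from the interference-side spectral object.

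Next, I would analyze the $L\times L$ diagonal blocks $[\mathbf{A}^{-1}[n]]_{kk}$ via random-matrix theory for sample-covariance matrices with a block-diagonal (separable) variance profile. A Silverstein-style deterministic-equivalent argument gives $[\mathbf{A}^{-1}[n]]_{kk}\simeq c_k\mathbf{I}_L$ to leading order, where the scalar $c_k$ solves a coupled fixed-point system driven by the empirical measure of the profiles $\{\mathbf{D}_i\}$; the scalar form is forced by the rotational invariance of each $\mathbf{g}_i$ within a block. Because the interferers form a stationary HPPP and the cooperating BSs occupy fixed positions, every cooperating BS sees the same marginal interferer distribution by translation invariance, so the $c_k$'s coincide to leading order and one can write $c_k=(KL)^{\alpha/2-1}\tilde c$. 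This collapses the sum to $(KL)^{\alpha/2-1}\mathbf{P}_K\,\tilde c$, which already matches the normalization in the theorem; it remains only to identify the scalar $\tilde c$.

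To compute $\tilde c$, I would pass $n\to\infty$ in the fixed-point equation and invoke Campbell's theorem to replace the sum over interferers by $\lambda\int_{\mathbb{R}^2}(\cdot)\,dz$, with integrand depending on $z$ only through its distances to the cooperating BSs. Under a change of variables matched to the target $L^{\alpha/2-1}$ scaling, the angular integration contributes a factor of $2\pi$ and the radial part collapses to the classical identity
$$\int_0^\infty \frac{u}{1+u^{\alpha}}\,du = \frac{\pi/\alpha}{\sin(2\pi/\alpha)},$$
which is precisely the source of the $\sin(2\pi/\alpha)$ factor. Collecting the $\lambda$ from Campbell's formula and raising to the $\alpha/2$ power dictated by the fixed-point equation then yields $\tilde c = \left[\frac{\alpha}{2\pi^2\lambda}\sin(2\pi/\alpha)\right]^{\alpha/2}$, as claimed.

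The main obstacle is that the standard deterministic-equivalent theorems in RMT require a fixed deterministic variance profile of bounded support, whereas here the profile $\{\mathbf{D}_i\}$ is itself random (driven by the PPP) and unbounded, since arbitrarily close interferers produce arbitrarily large entries. I would handle this by exploiting the positive-definite monotonicity of $\mathbf{A}[n]$ in $n$ noted in the excerpt to safely send $n\to\infty$ before $L\to\infty$; truncating the interferer process to a spatial window that grows with $L$ so that only a bounded profile matters at each stage; applying the RMT limit conditionally on the interferer configuration inside the window; and controlling the tail contribution outside the window by dominated convergence using the $r^{-\alpha}$ decay of the path-loss kernel. A secondary technical point is justifying the statistical-equivalence argument that equates the $c_k$'s across blocks, which will require carefully combining Palm/Slivnyak reasoning with the conditioning on the fixed BS locations.
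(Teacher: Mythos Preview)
Your plan follows the same large-random-matrix backbone as the paper: factor $\mathbf{h}_i = \mathbf{D}_i\mathbf{g}_i$, pass from the quadratic form to a deterministic equivalent, collapse to a scalar fixed point, and identify the constant. The execution differs in two respects. First, rather than a generic variance-profile deterministic equivalent, the paper invokes a block-structured result (Corollary~1.2 of \cite{BaiSilversteinMIMOCDMA}) that yields \eqref{Eqn:partialSIRLimit}--\eqref{eqn:implicitsol} directly once one verifies (i) convergence of the joint e.d.f.\ of the scaled path losses and (ii) a cross-block decorrelation condition, the latter handled by an artificial random-sign construction plus the SLLN. Second, for the limit interchange the paper couples $n=cLK$, takes $L\to\infty$ (so both dimensions grow with fixed ratio, as the RMT tool requires), and only then sends $c\to\infty$; the gap between $\mbox{SIR}$ and $\mbox{SIR}[n]$ is closed by a sandwich (Lemma~\ref{L: original proof with c}): $\mbox{SIR}\le\mbox{SIR}[n]$ trivially, and $\mbox{SIR}\ge\mbox{\underline{SIR}}[n]$ by applying the suboptimal $n$-interferer MMSE weights against the full interference. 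Your ``growing spatial window plus tail control'' is the same device in different clothing, but be aware that the RMT step itself needs the coupled limit, not $n\to\infty$ first. Your Campbell-plus-radial-integral identification of the constant is a legitimate alternative to the paper's citation of \cite{govindasamy13Correlated}.

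There is one genuine gap in your symmetry step. You justify $c_1=\cdots=c_K$ by ``every cooperating BS sees the same marginal interferer distribution by translation invariance.'' Marginal equality is not enough: the fixed-point system \eqref{eqn:implicitsol} is coupled through the \emph{joint} law of $(r_{i,1}^{-\alpha},\ldots,r_{i,K}^{-\alpha})$, and for fixed, distinct BS positions that joint law is not exchangeable---after your translation $z\mapsto z+b_k$ the integrand still carries the inter-BS displacements $b_k-b_j$. The actual reason the $c_k$'s coincide is scale-based rather than translation-based: under the $N^{\alpha/2}$ normalization the limiting e.d.f.\ is dominated by interferers at distance of order $\sqrt{N}$, for which $r_{i,k}\approx r_i$ uniformly in $k$, so the limiting joint e.d.f.\ $H(\tau_1,\ldots,\tau_K)$ degenerates onto the diagonal $\tau_1=\cdots=\tau_K$ (this is Lemma~\ref{lemma:EDF}). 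That degeneracy makes the $K$ fixed-point equations literally identical, and uniqueness then forces $a_k=a$. Your Campbell integral likewise only collapses to a single radial integral once you invoke this ``distant interferers are equidistant from all cooperating BSs'' fact; Palm/Slivnyak reasoning alone will not remove the residual $b_k-b_j$ dependence.
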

\noindent {\it Proof:} The proof is based on several results in large matrix theory, which are applicable when $n$ and $L$ grow with a fixed ratio. First, we shall consider a system with only the closest $n$ mobiles to the origin, and then take $n$ and $L$ to infinity. Additionally,  set $n= c L K$ with $c > 2$ which is not restrictive since the final result will have $n$ going to infinity before $L$. The applicability of this approach is guaranteed by the following lemma:
\begin{lemma}\label{L: original proof with c}
Under the conditions of Theorem \ref{Lemma:SIR}, a sufficient condition to guarantee (\ref{e: main theorem sir}) is that the following limit holds in probability:
\begin{align}\label{e: original limit with c}
\lim_{c\to\infty}\lim_{L\to\infty}\left| \frac{\mbox{SIR}[n]}{(KL)^{\alpha/2-1}\mathbf{P}_K }-\left[\frac{\,\alpha}{2\pi^2\lambda}\sin\left(\frac{2\pi}{\alpha}\right)\right]^{\frac{\alpha}{2}} \right|= 0
\end{align}
\end{lemma}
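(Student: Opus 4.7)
The proof is an order-of-limits argument built on the monotonicity already noted in the excerpt: since $\mathbf{A}[n]:=\mathbf{H}[n]\mathbf{H}^\dagger[n]$ is PSD non-decreasing in $n$, the quadratic form $\mbox{SIR}[n]=\mathbf{h}_0^\dagger\mathbf{A}[n]^{-1}\mathbf{h}_0$ is monotone non-increasing and converges almost surely to $\mbox{SIR}$ as $n\to\infty$. Writing $S_L$ and $S_{L,n}$ for $\mbox{SIR}/((KL)^{\alpha/2-1}\mathbf{P}_K)$ and $\mbox{SIR}[n]/((KL)^{\alpha/2-1}\mathbf{P}_K)$, and letting $C$ denote the right-hand side of \eqref{e: main theorem sir}, this monotonicity gives $S_L\le S_{L,cLK}$ for every $c>2$. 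The upper deviation is then immediate: $P(S_L>C+\epsilon)\le P(|S_{L,cLK}-C|>\epsilon)$, which vanishes as $L\to\infty$ then $c\to\infty$ by \eqref{e: original limit with c}.

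The substantive half is the lower deviation. The triangle inequality together with $S_{L,cLK}-S_L\ge 0$ gives
$$P(S_L < C-2\epsilon) \le P(|S_{L,cLK}-C|>\epsilon) + P(S_{L,cLK}-S_L>\epsilon),$$
so after disposing of the first term via the hypothesis, the task reduces to showing that the normalized truncation error $S_{L,cLK}-S_L$ vanishes in probability as $L\to\infty$ then $c\to\infty$. My plan is to use the resolvent identity $\mathbf{A}[cLK]^{-1}-\mathbf{A}^{-1}=\mathbf{A}[cLK]^{-1}\mathbf{A}_{\mathrm{tail}}\mathbf{A}^{-1}$ with $\mathbf{A}_{\mathrm{tail}}:=\sum_{i>cLK}\mathbf{h}_i\mathbf{h}_i^\dagger$, sandwich between $\mathbf{h}_0$, and invoke $\mathbf{A}^{-1}\le\mathbf{A}[cLK]^{-1}$ in PSD order to bound the truncation error by $\|\mathbf{h}_0\|^2\,\|\mathbf{A}[cLK]^{-1}\|_{\mathrm{op}}^2\,\|\mathbf{A}_{\mathrm{tail}}\|_{\mathrm{op}}$. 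The three factors are then estimated separately: $\|\mathbf{h}_0\|^2$ concentrates around $\mathbf{P}_K$ by a law-of-large-numbers on the fading; the tail norm is of order $(LK)^{2-\alpha/2}c^{1-\alpha/2}$ w.h.p.\ using the PPP distance asymptotic $r_i\asymp\sqrt{i/(\pi\lambda)}$ together with $\alpha>2$; and $\|\mathbf{A}[cLK]^{-1}\|_{\mathrm{op}}$ is of order $(LK)^{-1}$ w.h.p.\ via a smallest-singular-value estimate from large random matrix theory in the non-square regime $c>2$. After dividing by $(KL)^{\alpha/2-1}\mathbf{P}_K$, the combined bound scales like $c^{1-\alpha/2}(LK)^{1-\alpha}$, which vanishes in the double limit.

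The main obstacle I foresee is the smallest-eigenvalue control for $\mathbf{A}[cLK]$: the columns $\mathbf{h}_i$ are Gaussian in the fading but carry heterogeneous deterministic scales $r_{i,j}^{-\alpha/2}$ inherited from the PPP, so off-the-shelf i.i.d.\ random-matrix bounds (Bai--Yin, Marchenko--Pastur) do not apply verbatim. Since the proof of Theorem \ref{Lemma:SIR} itself must already invoke large-matrix machinery adapted to this non-i.i.d.\ column structure (e.g.\ Silverstein-type fixed-point tools), the needed spectral input can be borrowed from the same toolkit, and once it is available the remaining bookkeeping to assemble the bounds above is routine.
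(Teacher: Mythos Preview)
Your sandwiching strategy is the right one, and the upper-deviation half is identical to the paper's. For the lower deviation, your resolvent identity and the paper's suboptimal-filter argument are two packagings of the same quantity: the paper bounds $\mbox{SIR}$ below by the SIR obtained when the filter $\mathbf{c}[n]$ is applied to the full interference, which leads to controlling $\mathbf{c}[n]^\dagger\mathbf{A}_{\mathrm{tail}}\mathbf{c}[n]=\mathbf{h}_0^\dagger\mathbf{A}[n]^{-1}\mathbf{A}_{\mathrm{tail}}\mathbf{A}[n]^{-1}\mathbf{h}_0$, a close cousin of your $\mathbf{h}_0^\dagger\mathbf{A}[n]^{-1}\mathbf{A}_{\mathrm{tail}}\mathbf{A}^{-1}\mathbf{h}_0$.

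There is, however, a genuine gap in your order estimates. Your claim that $\|\mathbf{A}[cLK]^{-1}\|_{\mathrm{op}}$ is of order $(LK)^{-1}$ is incorrect: precisely because of the path-loss heterogeneity you flag as the obstacle, the smallest eigenvalue of $\mathbf{A}[cLK]$ is determined by the distant (weak) interferers and scales like $L^{1-\alpha/2}$, not like $LK$. This is exactly what the paper establishes as Lemma~\ref{lemma:MinEval}, and it is the right order, not just a crude lower bound. With $\|\mathbf{A}[cLK]^{-1}\|_{\mathrm{op}}=O(L^{\alpha/2-1})$, your claimed tail-norm order $(LK)^{2-\alpha/2}c^{1-\alpha/2}$ (which is a trace bound, not an operator-norm bound) is too loose: the three factors combine, after normalization by $(KL)^{\alpha/2-1}\mathbf{P}_K$, to $c^{1-\alpha/2}L\,K^{3-\alpha}$, which blows up in $L$. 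Your two errors happened to cancel in your final line $c^{1-\alpha/2}(LK)^{1-\alpha}$, but once the eigenvalue order is corrected the argument as written fails.

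The fix is to sharpen the tail estimate from the trace $LK\cdot n^{1-\alpha/2}$ down to the quadratic-form scale $n^{1-\alpha/2}$. The paper does this (Lemma~\ref{lemma:ConvergenceOfQuadraticForm}) not by bounding $\|\mathbf{A}_{\mathrm{tail}}\|_{\mathrm{op}}$ but by controlling the quadratic form $\mathbf{c}[n]^\dagger\mathbf{A}_{\mathrm{tail}}\mathbf{c}[n]$ directly: splitting $\mathbf{c}[n]$ into its $K$ per-BS blocks and invoking a known tail-sum convergence for the PPP to obtain $n^{\alpha/2-1}\mathbf{c}[n]^\dagger\mathbf{A}_{\mathrm{tail}}\mathbf{c}[n]/\|\mathbf{c}[n]\|^2\le\mathrm{const}$, after which the smallest-eigenvalue bound on $\|\mathbf{c}[n]\|^2=\mathbf{h}_0^\dagger\mathbf{A}[n]^{-2}\mathbf{h}_0$ closes the argument with a final bound $O(c^{1-\alpha/2})$ that is independent of $L$. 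You could alternatively prove the true operator-norm estimate $\|\mathbf{A}_{\mathrm{tail}}\|_{\mathrm{op}}=O((cLK)^{1-\alpha/2})$ via concentration for the infinite sum of rank-one fading matrices, which would rescue your operator-norm route, but that is additional work your proposal does not address.
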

{\it Proof:} Please see Appendix \ref{App: proof of sufficinecy}.

Thus, we only need to prove (\ref{e: original limit with c}). For this, we use the following lemma which ensures that a scaled version $\mathbf{H}[n]\mathbf{H}^\dagger[n]$ in \eqref{e: SIR finite} is invertible for large $L$ with high probability.
\begin{lemma}\label{lemma:MinEval}
Assuming $c>2$, then with probability 1 (\wpOne) the smallest eigenvalue of the matrix from \eqref{e: SIR finite} is bounded from below as follows.
\begin{align}\label{e: eq lemma 2}
\lim_{L\to\infty}\gamma_{\text{min}} \left\{L^{\alpha/2-1}\mathbf{H}[n]\mathbf{H}^\dagger[n]\right\} \geq \left(\pi\lambda\right)^{\alpha/2} \left(1.5-\sqrt{2}\right)(2K)^{1-\alpha/2} \triangleq \gamma_{lb} > 0
\end{align}
\end{lemma}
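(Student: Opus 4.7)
The plan is to lower bound $\gamma_{\min}(\mathbf{H}[n]\mathbf{H}^\dagger[n])$ by restricting the defining sum to the $\bar n:=2KL$ mobiles nearest the origin (valid since $n=cKL$ with $c>2$) and then invoking a Marchenko--Pastur-type limit at aspect ratio $\bar n/(KL)=2$, which is precisely what produces the constant $(1-1/\sqrt{2})^2=1.5-\sqrt{2}$.

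First I would use the L\"owner monotonicity $\mathbf{H}[n]\mathbf{H}^\dagger[n]=\sum_{i=1}^n\mathbf{h}_i\mathbf{h}_i^\dagger\succeq\sum_{i=1}^{\bar n}\mathbf{h}_i\mathbf{h}_i^\dagger$ to reduce to the truncated matrix $\mathbf{H}[\bar n]$, then control the geometry. Since the $\bar n$ retained mobiles all lie in the disk $B(0,r_{\bar n})$, and the strong law for PPPs gives $r_{\bar n}/\sqrt{\bar n/(\pi\lambda)}\to 1$ a.s., while the BS locations $A_1,\ldots,A_K$ are fixed, the triangle inequality implies that for any $\epsilon>0$ and a.s.\ all sufficiently large $L$,
\[
r_{i,k}\le R_L:=(1+\epsilon)\sqrt{2KL/(\pi\lambda)},\qquad i\le\bar n,\;k\le K.
\]
Via the factorization $\mathbf{h}_i=\mathbf{D}_i\mathbf{g}_i$ with $\mathbf{D}_i=\mathrm{diag}(r_{i,k}^{-\alpha/2}\mathbf{I}_L)$ this is equivalent to the uniform covariance domination $\mathbf{D}_i^2\succeq R_L^{-\alpha}\mathbf{I}_{LK}$.

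The next step is to pass this per-column lower bound through to the spectrum, establishing
\[
\gamma_{\min}\!\bigl(\mathbf{H}[\bar n]\mathbf{H}^\dagger[\bar n]\bigr)\;\ge\;R_L^{-\alpha}\,\gamma_{\min}\!\bigl(\mathbf{G}[\bar n]\mathbf{G}^\dagger[\bar n]\bigr)\,(1-o(1)),
\]
where $\mathbf{G}[\bar n]$ is $LK\times\bar n$ with i.i.d.\ $\mathcal{CN}(0,1)$ entries. The Bai--Yin theorem then gives $\gamma_{\min}(\mathbf{G}[\bar n]\mathbf{G}^\dagger[\bar n])/(LK)\to(\sqrt{2}-1)^2=2(1.5-\sqrt{2})$ a.s., and the bookkeeping
\[
L^{\alpha/2-1}R_L^{-\alpha}(2KL)(1.5-\sqrt{2})=(1+\epsilon)^{-\alpha}(\pi\lambda)^{\alpha/2}(1.5-\sqrt{2})(2K)^{1-\alpha/2}
\]
finishes the proof after sending $\epsilon\downarrow 0$.

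The hard part is the displayed inequality, which is genuinely delicate: the naive pointwise bound $\mathbf{h}_i\mathbf{h}_i^\dagger\succeq R_L^{-\alpha}\mathbf{g}_i\mathbf{g}_i^\dagger$ is \emph{false} in the L\"owner order whenever $\mathbf{D}_i$ is not a scalar matrix (a two-dimensional counterexample is $\mathbf{D}=\mathrm{diag}(1,10)$, $\mathbf{g}=(10,-1)^T/\sqrt{101}$, $\mathbf{v}=(1,1)^T/\sqrt{2}$, for which $\mathbf{v}^\dagger\mathbf{h}\mathbf{h}^\dagger\mathbf{v}=0$ while $\mathbf{v}^\dagger\mathbf{g}\mathbf{g}^\dagger\mathbf{v}=81/202$), and any attempt to lower bound by a block-diagonal piece goes the wrong way by Cauchy interlacing. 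The clean route is to exploit the block factorization $\mathbf{H}_k=\mathbf{G}_k\,\mathrm{diag}(\mathbf{d}_k)$, in which the $\mathbf{G}_k$'s are independent i.i.d.\ complex Gaussian blocks (since antennas at distinct BSs carry independent fading), so that $\mathbf{H}[\bar n]$ is a Gaussian matrix with independent but non-identically distributed columns all of whose covariances dominate $R_L^{-\alpha}\mathbf{I}$, and then invoke a sample-covariance-matrix limit of Silverstein--Bai type for such ensembles, whose limiting smallest eigenvalue is known to dominate $R_L^{-\alpha}$ times the Marchenko--Pastur value at the aspect ratio $\bar n/(LK)=2$. Making this last step quantitative and almost-sure, uniformly in $L$, is the main technical obstacle.
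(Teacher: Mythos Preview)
Your overall plan is exactly the paper's: truncate to the $M=2KL$ nearest mobiles via Weyl/L\"owner monotonicity, bound every distance $r_{i,j}$ for $i\le M$, $j\le K$ by a single common radius (the paper uses $r_{M+1}+R_{K+1}$, asymptotically equivalent to your $R_L$), compare to the i.i.d.\ complex Gaussian matrix $\mathbf{G}[M]$, and apply Bai's smallest-eigenvalue limit at aspect ratio $M/(KL)=2$ to produce $(1-1/\sqrt2)^2=1.5-\sqrt2$; the convergence $\pi\lambda\,r_{M+1}^2/M\to1$ supplies the $(\pi\lambda)^{\alpha/2}$ factor, and the bookkeeping is the same as yours.

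The one substantive difference is how the ``hard part'' is handled. The paper does \emph{not} take your detour through a Silverstein--Bai limit for non-identically distributed columns; instead it writes
\[
\mathbf{H}[M]\mathbf{H}^\dagger[M]=(r_{M+1}+R_{K+1})^{-\alpha}\sum_{i=1}^{M}\mathbf{g}_i\mathbf{g}_i^\dagger+\sum_{i=1}^{M}\bigl(\mathbf{h}_i\mathbf{h}_i^\dagger-(r_{M+1}+R_{K+1})^{-\alpha}\mathbf{g}_i\mathbf{g}_i^\dagger\bigr)
\]
and asserts that the bracketed sum is positive semidefinite because $r_{i,j}<r_{M+1}+R_{K+1}$, then applies Weyl once more. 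In other words, the paper takes precisely the step you flagged as ``naive'' at the aggregate level, without addressing the per-term failure you exhibit; your $2\times2$ counterexample is valid and shows that $\mathbf{D}_i\succeq c\mathbf{I}$ does not imply $\mathbf{D}_i\mathbf{g}_i\mathbf{g}_i^\dagger\mathbf{D}_i\succeq c^2\mathbf{g}_i\mathbf{g}_i^\dagger$ termwise. So you have been more scrupulous than the paper here: you correctly identified the delicate point, whereas the paper simply asserts the aggregate PSD claim and moves on. Your proposed fix via a general sample-covariance limit with per-column covariance domination is a reasonable route to make the step rigorous, but as you acknowledge it is not completed in your proposal either.
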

{\it Proof:} Please see Appendix \ref{sec:MinEvalLemmaProof}.

Thus, for sufficiently large $L$, $L^{\alpha/2-1}\mathbf{H}[n]\mathbf{H}^\dagger[n]$ is invertible \wpOne.  Normalizing SIR$[n]$ from \eqref{e: SIR finite}, and  assuming the limit below exists, we have the following \wpOne.
\begin{align}
\lim_{L\to\infty}L^{-\alpha/2}\text{SIR}[n] &= \lim_{\zeta \to 0}\lim_{L\to\infty}\frac1L\mathbf{h}_0^\dagger \left(L^{\alpha/2-1}\mathbf{H}[n]\mathbf{H}^\dagger[n]+\zeta\mathbf{I}\right)^{-1}\mathbf{h}_0 \label{eqn: remove noise}
\end{align}

The existence of the above limit, and  its form are given in the following lemma
\begin{lemma}\label{lemma:LimitNormSIR}
 The following limit holds in probability
\begin{align}
\lim_{c\to \infty}\lim_{\zeta\to 0}\lim_{L\to\infty}\frac{1}{\mathbf{P}_K K^{\frac{\alpha}{2}-1}}\mathbf{h}_0^\dagger \left(L^{\alpha/2-1}\mathbf{H}[n]\mathbf{H}^\dagger[n]+\zeta\mathbf{I}\right)^{-1}\mathbf{h}_0 = \left[\frac{\,\alpha}{2\pi^2\lambda}\sin\left(\frac{2\pi}{\alpha}\right)\right]^{\frac{\alpha}{2}} \label{eqn: convergence to limit}
\end{align}
\end{lemma}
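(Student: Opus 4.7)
My plan is to replace the random quadratic form by a deterministic equivalent via a standard trace lemma from large random matrix theory, and then identify the limit with the value at $-\zeta$ of the Stieltjes transform of the limiting spectral distribution of $L^{\alpha/2-1}\mathbf{H}[n]\mathbf{H}^\dagger[n]$. Writing $\mathbf{A}_L(\zeta) \triangleq \left(L^{\alpha/2-1}\mathbf{H}[n]\mathbf{H}^\dagger[n]+\zeta\mathbf{I}\right)^{-1}$, the vector $\mathbf{h}_0$ is (conditionally on the mobile locations and on $\mathbf{H}[n]$) independent of $\mathbf{A}_L(\zeta)$ and has independent complex-Gaussian entries with the block-constant variance profile $\mathbf{D}_0^2 = \mathrm{diag}(r_{0,1}^{-\alpha}\mathbf{I}_L,\ldots,r_{0,K}^{-\alpha}\mathbf{I}_L)$. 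Since $\|\mathbf{A}_L(\zeta)\|\leq 1/\zeta$ uniformly in $L$, the Bai--Silverstein quadratic-form lemma yields
\begin{align}
\mathbf{h}_0^\dagger\mathbf{A}_L(\zeta)\mathbf{h}_0 - \mathrm{tr}\!\left(\mathbf{D}_0^2\mathbf{A}_L(\zeta)\right) \xrightarrow{\;\mathrm{p}\;} 0 \qquad (L\to\infty).
\end{align}

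Next I would exploit the block structure. Letting $\mathrm{tr}_k$ denote the partial trace over the $L$ diagonal entries associated with the $k$-th cooperating BS, the block-constant profile of $\mathbf{D}_0^2$ gives $\mathrm{tr}\!\left(\mathbf{D}_0^2\mathbf{A}_L(\zeta)\right) = \sum_{k=1}^K r_{0,k}^{-\alpha}\,\mathrm{tr}_k\mathbf{A}_L(\zeta)$. Within each BS, the interferer channel coefficients across the $L$ antennas are i.i.d.\ Gaussian with a common (per-interferer) variance, so the joint law of $\mathbf{H}[n]\mathbf{H}^\dagger[n]$ is invariant under unitary rotations acting jointly on the $L$ antennas of any single BS; this forces each diagonal block of $\mathbf{A}_L(\zeta)$ to be asymptotically a scalar multiple of $\mathbf{I}_L$. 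Combined with the stationarity of the interferer PPP, which ensures that every cooperating BS sees a statistically identical interference field, the per-BS normalized partial traces share a common limit:
\begin{align}
\frac{1}{L}\mathrm{tr}_k\mathbf{A}_L(\zeta) - \frac{1}{KL}\mathrm{tr}\,\mathbf{A}_L(\zeta)\xrightarrow{\;\mathrm{p}\;} 0,\quad k=1,\dots,K.
\end{align}
Using $\mathbf{P}_K=L\sum_k r_{0,k}^{-\alpha}$, this reduces the lemma to showing
\begin{align}
\lim_{c\to\infty}\lim_{\zeta\to 0}\lim_{L\to\infty}\frac{1}{KL}\mathrm{tr}\,\mathbf{A}_L(\zeta) = K^{\alpha/2-1}\left[\frac{\alpha}{2\pi^2\lambda}\sin\!\left(\frac{2\pi}{\alpha}\right)\right]^{\alpha/2}.
\end{align}

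The quantity $m_L(\zeta)\triangleq (KL)^{-1}\mathrm{tr}\,\mathbf{A}_L(\zeta)$ is the Stieltjes transform at $-\zeta$ of the empirical spectral distribution of $L^{\alpha/2-1}\mathbf{H}[n]\mathbf{H}^\dagger[n]=\sum_{i=1}^{n} L^{\alpha/2-1}\mathbf{D}_i\tilde{\mathbf g}_i\tilde{\mathbf g}_i^\dagger\mathbf{D}_i$, a Gram matrix with a structured Gaussian variance profile. I would invoke a Silverstein--Bai (or Girko) deterministic-equivalent result for such matrices to obtain a fixed-point equation for $m(\zeta)=\lim_L m_L(\zeta)$; Campbell's theorem then converts the sum over PPP interferer positions into an integral over $\mathbb{R}^2$. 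Sending $c\to\infty$ extends the integral to the whole plane, and letting $\zeta\to 0^+$ (justified by the uniform positive lower bound on the smallest eigenvalue from Lemma \ref{lemma:MinEval}) produces a purely radial integral. The change of variables $u=r^{-\alpha}$ reduces this to the classical Mellin-type integral $\int_0^\infty u^{2/\alpha-1}(1+u)^{-1}\,du = (2\pi/\alpha)/\sin(2\pi/\alpha)$, which is the source of the $\sin(2\pi/\alpha)$ factor and, after raising to the power dictated by the fixed-point equation, yields the target expression.

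The hard part is this third step. The matrix has two non-standard features: its variance profile is itself a random functional of the interferer PPP, and the scaling $L^{\alpha/2-1}$ differs from the usual $1/L$ normalization, reflecting the fact that the per-entry variances $r_{i,k}^{-\alpha}$ do not shrink with $L$ but are instead balanced by the geometry of the PPP, which extends over a disk of radius $\sim\sqrt{L}$. Carefully justifying the deterministic equivalent in this regime, and interchanging the three limits $L\to\infty$, $\zeta\to 0^+$, $c\to\infty$ in a manner compatible with integrability at both $r\to 0$ and $r\to\infty$, constitutes the core technical work; Lemmas \ref{L: original proof with c} and \ref{lemma:MinEval} are precisely the tools needed to handle the $c$- and $\zeta$-limits uniformly in $L$.
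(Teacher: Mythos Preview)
Your route differs from the paper's. The paper never passes through the trace lemma or the Stieltjes transform of the full matrix; instead it invokes Corollary~1.2 of \cite{BaiSilversteinMIMOCDMA}, which directly gives
\[
\lim_{L\to\infty}\tfrac{1}{L}\,\mathbf{h}_0^\dagger\bigl(\tfrac{1}{L}N^{\alpha/2}\mathbf{H}[n]\mathbf{H}^\dagger[n]+\zeta\mathbf{I}\bigr)^{-1}\mathbf{h}_0=\sum_{k=1}^K r_{0,k}^{-\alpha}\,a_k,
\]
with the $a_k$ the unique solution of a system of $K$ coupled fixed-point equations. To meet the hypotheses of that corollary (in particular a vanishing cross-correlation condition between blocks) the paper injects auxiliary i.i.d.\ $\pm1$ signs $q_{i,k}$, rewriting $g_{i,\ell}=q_{i,k}\bar g_{i,\ell}$; the law of $\mathbf{h}_i$ is unchanged but the required cross sums now vanish by the strong law. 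Convergence of the empirical joint law of $(N^{\alpha/2}r_{i,1}^{-\alpha},\dots,N^{\alpha/2}r_{i,K}^{-\alpha})$ is handled separately (Lemma~\ref{lemma:EDF}). Symmetry of the resulting fixed-point system plus uniqueness forces $a_1=\cdots=a_K=a$, and the remaining scalar equation is identified with the one already solved for $K=1$ in \cite{govindasamy13Correlated}. In short, the paper leverages two off-the-shelf results and a sign-symmetrization trick where you propose to build the deterministic equivalent from scratch.

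Your overall plan is coherent, but the step ``per-BS partial traces coincide'' is not justified by stationarity of the mobile PPP. The cooperating BSs sit at \emph{fixed, arbitrary} positions, and $\tfrac{1}{L}\mathrm{tr}_k\mathbf{A}_L(\zeta)$ depends on all $K$ blocks of $\mathbf{H}[n]\mathbf{H}^\dagger[n]$, not only on block $k$; translating the PPP so that BS $k$ sits at the origin simultaneously moves the other BSs and changes which $n$ mobiles enter $\mathbf{H}[n]$, so there is no exchangeability across $k$. What actually equalizes the partial traces is a \emph{far-field} effect: all but $O(1)$ of the $n=cKL$ interferers lie at distance $\gg R_K$ from the origin, and for those $r_{i,k}=r_i(1+o(1))$ uniformly in $k$, so the variance profile is asymptotically the same in every block. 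That is precisely the content of Lemma~\ref{lemma:EDF}, and it (or an equivalent sandwich on $r_i\pm R_K$) is what you need to supply here. Once you have it, your Stieltjes-transform computation indeed collapses to the single-block problem, and the Mellin integral you wrote is the correct endgame---matching what the paper imports from \cite{govindasamy13Correlated}.
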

{\it Proof:} Please see Appendix \ref{Proof of LimitNormSIR}.

Combining \eqref{eqn: convergence to limit} with \eqref{eqn: remove noise} yields \eqref{e: original limit with c}.

Theorem \ref{Lemma:SIR} implies that for large $L$, the SIR is primarily influenced by $K$, $L$ and $\mathbf{P}_K$, and that the specific locations of interferers does not play a significant role. The physical interpretation of this property is that the large number of antennas at the BSs enable the MMSE receiver to place deep nulls in the directions of nearby interferers, reducing the dependence on the specific locations of interferers. Removing the normalization of the SIR term in Theorem \ref{Lemma:SIR},  we define the asymptotic SIR  which indicates how the SIR grows with various system parameters.
\begin{align}
\mbox{SIR}_\mathrm{asym} = \mathbf{P}_K\left[\frac{\,\alpha}{2\pi^2\lambda}\sin\left(\frac{2\pi}{\alpha}\right)\right]^{\frac{\alpha}{2}}(KL)^{\alpha/2-1}\,. \label{Eqn:SIRAsymPK}
\end{align}

\subsection{Convergence of the Spectral Efficiency} \label{sec:specEffLink}

Using Theorem \ref{Lemma:SIR}, we can characterize the asymptotic behavior of the spectral efficiency for the test link, by showing that the difference between the spectral efficiency and an asymptotic expression becomes negligible as follows.
\begin{theorem}\label{Theorem:Main}
Assuming that Gaussian codebooks are used by each mobile and neglecting thermal noise, the spectral efficiency in the uplink of the test mobile, satisfies the following:
\begin{align}\label{e: main theorem}
\lim_{L\to\infty}\left|\eta -  \eta_{\text{asym}}\right| = 0
\end{align}
where the asymptotic spectral efficiency is defined as follows
\begin{align}
\eta_{\text{asym}} = \log_2\left(1+\mathbf{P}_K\left[\frac{\,\alpha}{2\pi^2\lambda}\sin\left(\frac{2\pi}{\alpha}\right)\right]^{\frac{\alpha}{2}}(KL)^{\alpha/2-1}\right)\,. \label{eqn:AsymSpecEffDef}
\end{align}
\end{theorem}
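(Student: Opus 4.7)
The plan is to reduce Theorem \ref{Theorem:Main} to Theorem \ref{Lemma:SIR} by a direct logarithmic manipulation, exploiting the fact that both $\mbox{SIR}$ and the quantity inside $\eta_{\text{asym}}$ diverge as $L\to\infty$. Setting $C := \left[\frac{\alpha}{2\pi^2\lambda}\sin\!\left(\tfrac{2\pi}{\alpha}\right)\right]^{\alpha/2}$ and $\mbox{SIR}_{\text{asym}} := \mathbf{P}_K C (KL)^{\alpha/2 - 1}$ so that $\eta_{\text{asym}} = \log_2(1+\mbox{SIR}_{\text{asym}})$, I would write the gap as a single logarithm
\begin{equation}
\eta - \eta_{\text{asym}} \;=\; \log_2\!\left(\frac{1 + \mbox{SIR}}{1 + \mbox{SIR}_{\text{asym}}}\right).
\end{equation}

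Next I would divide numerator and denominator inside the logarithm by $\mbox{SIR}_{\text{asym}}$ to obtain
\begin{equation}
\eta - \eta_{\text{asym}} \;=\; \log_2\!\left(\frac{\mbox{SIR}_{\text{asym}}^{-1} + \mbox{SIR}/\mbox{SIR}_{\text{asym}}}{\mbox{SIR}_{\text{asym}}^{-1} + 1}\right).
\end{equation}
Since $\alpha > 2$ and $\mathbf{P}_K$ is a strictly positive constant (fixed by the arbitrary but finite distances $r_{0,1},\ldots,r_{0,K}$), the factor $(KL)^{\alpha/2 - 1}$ diverges deterministically, so $\mbox{SIR}_{\text{asym}}^{-1} \to 0$. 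By Theorem \ref{Lemma:SIR}, $\mbox{SIR}/\mbox{SIR}_{\text{asym}} \to 1$ in probability. Therefore the argument of $\log_2$ converges to $1$ in probability. Since $\log_2(\cdot)$ is continuous at $1$, the continuous mapping theorem yields $\eta - \eta_{\text{asym}} \to 0$ in probability, which is what the statement claims.

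There is no real obstacle here; the proof is essentially a corollary of Theorem \ref{Lemma:SIR} combined with continuity of the logarithm. The only points that require brief care are (i) noting that $\mbox{SIR} \ge 0$ so the logarithm's argument stays bounded below by a positive value on high-probability events once $L$ is large (ensuring no issue with $\log_2$ at the origin), and (ii) matching the mode of convergence: since Theorem \ref{Lemma:SIR} is stated in probability, Theorem \ref{Theorem:Main} should be interpreted in the same sense, and the continuous mapping theorem applies verbatim to convergence in probability. No additional random-matrix machinery or stochastic-geometry calculation is needed beyond what is already invoked in Theorem \ref{Lemma:SIR}.
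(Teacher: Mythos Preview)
Your proposal is correct and follows essentially the same route as the paper: write $\eta-\eta_{\text{asym}}$ as a single logarithm, divide through by the deterministic normalizer, and apply Theorem \ref{Lemma:SIR} together with the continuous mapping theorem. One minor slip: $\mathbf{P}_K = L\sum_{k=1}^K r_{0,k}^{-\alpha}$ is not a fixed constant but grows linearly in $L$ (as the paper notes), so $\mbox{SIR}_{\text{asym}}^{-1}\to 0$ holds for an even stronger reason than you state, and the argument is unaffected.
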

\begin{proof}
\begin{align}
|\eta - \eta_{\mathrm{asym}}|  &= \left|\log_2(1+\mbox{SIR}) - \log_2\left(1+\left[\frac{\,\alpha}{2\pi^2\lambda}\sin\left(\frac{2\pi}{\alpha}\right)\right]^{\frac{\alpha}{2}}(KL)^{\alpha/2-1}\mathbf{P}_K\right)\right|\notag
\end{align}
\begin{align}
&=\left|\log_2\left(\frac{\frac{1}{(KL)^{\alpha/2-1}\mathbf{P}_K}+\frac{\mbox{SIR}}{(KL)^{\alpha/2-1}\mathbf{P}_K}}{\frac{1}{(KL)^{\alpha/2-1}\mathbf{P}_K}+\left[\frac{\,\alpha}{2\pi^2\lambda}\sin\left(\frac{2\pi}{\alpha}\right)\right]^{\frac{\alpha}{2}}}\right)\right|\notag
\end{align}
Since $\mathbf{P}_K$ is monotonically increasing with $L$ and $\alpha > 2$,  by Theorem \ref{Lemma:SIR} and the continuous mapping theorem, the left hand side terms in both the numerator and the denominator  in the log term above vanish in the limit, and we have \eqref{e: main theorem} in probability.
\end{proof}

Thus, for large enough $L$, the spectral efficiency is well approximated as
\begin{align}
\eta \approx \eta_\mathrm{asym}= \log_2\left(1+\mbox{SIR}_\mathrm{asym} \right)\,. \label{eqn:ApproxSpecEffAsym}
\end{align}

\section{SIR and Spectral efficiency with specific BS distributions}\label{Sec:BaseStationDist}
In this section we demonstrate the generality of the results in the previous section by applying them to two popular BS distribution models.
The results presented in the previous sections assume that BSs locations are fixed.  In subsection \ref{Sec:RandomDist}, we extend these results to base stations distributed randomly on a plane, and in section \ref{Sub: Exact SIR}, we apply it to BSs randomly distributed according to a HPPP, which is a common model for BS distributions adopted in the literature (see e.g. \cite{AndrewsCellular}, \cite{novlan2012analytical}, \cite{NovlanFracReuse} and others). In subsection \ref{sec:HexCells},  we extend these results to networks with base stations located on a hexagonal grid.

\subsection{Random spatial BS distribution}\label{Sec:RandomDist}

Suppose that the BSs are now randomly distributed in space, the probability of having a BS at the origin is zero  and with probability 1, all cells are bounded. Theorem \ref{Lemma:SIR} is applicable, when conditioned on a specific realization of the BS locations. Note from Theorem \ref{Lemma:SIR}, that the limiting normalized SIR only depends on the average received power at the co-operating antennas, $\mathbf{P}_K$. Hence, we can statistically characterize the spectral efficiency under a random spatial BS distribution by considering a random $\mathbf{P}_K$.  Let $F(x)$ denote the CDF of $\mathbf{P}_K$. Since for large $L$, Theorem \ref{Theorem:Main} provides a tight approximation for the spectral efficiency, we can approximate the CDF of the spectral efficiency as follows
\begin{align}
P(\eta \leq \tau ) &\approx P(\eta_\mathrm{asym} \leq \tau ) = F_P\left(   \frac{\left(2^\tau-1\right)K^{1-\frac{\alpha}{2}} L^{-\frac{\alpha}{2}}\texttt{}\lambda^{\frac{\alpha}{2}}}{\left[\frac{\,\alpha}{2\pi^2}\sin\left(\frac{2\pi}{\alpha}\right)\right]^{\frac{\alpha}{2}}}\right)\,.  \label{Eqn:SpecEffCDFApprox}
\end{align}
This expression provides a convenient way to statistically characterize the spectral efficiency of a link when the distribution of BSs is random.

An important performance metric for analyzing wireless systems is the outage spectral efficiency, which is the spectral efficiency that can be achieved with a given outage probability. Asymptotically, the only source of randomness in the spectral efficiency is due to $\mathbf{P}_K$. Thus for a given probability of outage $p_o$, we can find the corresponding value of $\mathbf{P}_K$ by inverting the CDF of $\mathbf{P}_K$, and denote it as follows
\begin{align}
\mathbf{P}_{K,\text{out}} = F^{-1}(p_o).
\end{align}
The spectral efficiency for this outage probability is then
\begin{align}
\log_2\left(1+\mathbf{P}_{K, \text{out}}\left[\frac{\,\alpha}{2\pi^2\lambda}\sin\left(\frac{2\pi}{\alpha}\right)\right]^{\frac{\alpha}{2}}(KL)^{\alpha/2-1}\right)\,.
\end{align}

It is important to recall that all of the analysis above was performed for a specific location of the test mobile (the origin). Thus, the distribution of $\mathbf{P}_K$ described herein should represent the BS distances as measured from this location. However, the results are much more interesting if the BS distribution is homogenous, i.e., if the distribution of distances to BSs is identical at any point. In such case, the test mobile is a typical mobile, and the derived results represent the distribution of the asymptotic SIR for any mobile. The most common model for BS locations used in the literature is to model them as a HPPP, which also satisfies the requirements outlined in this section.

\subsubsection{Poisson-distributed BSs}\label{Sub: Exact SIR}

Suppose that the BSs are distributed according to a HPPP with density $\lambda_b$ BSs per unit area. Due to Sylvnak's theorem and the ergodicity of the PPP (see e.g. \cite{Stoyan}, \cite{HaenggiJSAC}), the CDF of the spectral efficiency of the test link equals the CDF of the spectral efficiency of a randomly selected link in the network.

The spectral efficiency of the test link is found using \eqref{Eqn:SpecEffCDFApprox} with the CDF of $\mathbf{P}_K$.  The CDF of $\mathbf{P}_K$ for the HPPP model of base station distributions is characterized in the following lemma which we also expect to be useful in other contexts.
\begin{lemma} \label{lemma:PathLossSum}
The CDF of $\mathbf{P}_K$ is  $F_P(x)\triangleq P\left(  \sum_{i= 1}^Kr_{0,i}^{-\alpha} \leq \frac{x}{L}\right)$, where
\begin{align}
& P\left(  \sum_{i= 1}^Kr_{0,i}^{-\alpha} \leq x\right) = e^{-\lambda_b \pi  \left(\frac{K}{x}\right)^{\frac{2}{\alpha}}}\!\sum_{\ell = 0}^K \frac{\left(\lambda_b\pi \left(\frac{K}{x}\right)^{2/\alpha}\right)^\ell}{\ell !} +\frac{1}{\pi}\sum_{\ell = 1}^K\! {K \choose \ell} \!\! \left(-\Gamma\left(1-\frac{2}{\alpha}\right)\right)^\ell \!\sin\left(\frac{2\pi\ell}{\alpha}\right) \nonumber \\
&\times \sum_{m = 0}^\infty A_{K-\ell, m}\Gamma\left(m+\frac{2}{\alpha}\ell\right)x ^{-m-\frac{2}{\alpha}\ell} \frac{1}{K!}\left(\lambda_b\pi\right)^{\frac{1}{2}m\alpha+\ell}\Gamma\left(K-\frac{m\alpha}{2}-\ell +1, \left(\frac{K}{x}\right)^{\frac{2}{\alpha}}\lambda_b\pi \right)\label{Eqn:CDFAvgPathLossLemmaStatement}
\end{align}
The coefficients $A_{i,j}$ are defined recursively as follows
\begin{align}
A_{i,j} = \frac{1}{j}\sum_{\ell = 1}^j\left(\ell(i+1)-j\right)\left(\frac{2}{\ell ! (2-\ell \alpha)}\right)A_{i,j-\ell}  \label{Eqn:CDFCoeffs}
\end{align}
with $A_{i,0} = 1$ for $0< i\leq K$ and  $\Gamma(\cdot, \cdot)$ is the upper incomplete gamma function.
\end{lemma}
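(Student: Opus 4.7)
The plan is to reduce to a 1D unit-rate Poisson process via the Mapping Theorem, factor the target sum into an independent product of a Gamma-distributed cutoff and a sum of $K$ i.i.d.\ Pareto variables, and then invert the Laplace transform of the Pareto sum by contour deformation around a branch cut. First I would set $\rho_i = \pi\lambda_b r_{0,i}^2$, so that $\{\rho_i\}$ are the ordered atoms of a unit-rate PPP on $[0,\infty)$ and the problem reduces to computing $P(T\leq u)$ with $T := \sum_{i=1}^K \rho_i^{-\alpha/2}$ and $u = x/(\pi\lambda_b)^{\alpha/2}$. Next, I would condition on $\rho_{K+1}$, which is $\mathrm{Gamma}(K+1,1)$-distributed; by the standard Poisson property, given $\rho_{K+1}$ the first $K$ atoms are i.i.d.\ $\mathrm{Uniform}(0,\rho_{K+1})$. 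Writing $\rho_i = \rho_{K+1} U_i$ with $U_i$ i.i.d.\ $\mathrm{Uniform}(0,1)$ independent of $\rho_{K+1}$ factors $T = \rho_{K+1}^{-\alpha/2}\,V$, where $V := \sum_{i=1}^K U_i^{-\alpha/2}\ge K$ almost surely and $V\perp \rho_{K+1}$. Since the complementary CDF $\bar{F}_V := 1 - F_V$ equals $1$ on $[0,K)$,
\begin{align*}
P(T \leq u) = P\bigl(\rho_{K+1} \geq (K/u)^{2/\alpha}\bigr) - E\bigl[\bar{F}_V(u\,\rho_{K+1}^{\alpha/2})\,\mathbf{1}_{\{\rho_{K+1} \geq (K/u)^{2/\alpha}\}}\bigr],
\end{align*}
and the first term evaluates, via the Gamma tail formula $P(\mathrm{Gamma}(K{+}1,1) \geq z) = e^{-z}\sum_{j=0}^K z^j/j!$ and the substitution $u = x/(\pi\lambda_b)^{\alpha/2}$, to the leading Poisson-sum term in \eqref{Eqn:CDFAvgPathLossLemmaStatement}.

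Next I would compute $\bar{F}_V$ by Laplace inversion. Direct integration yields the single-Pareto Laplace transform $E[e^{-sU^{-\alpha/2}}] = (2/\alpha)\,s^{2/\alpha}\,\Gamma(-2/\alpha,s) = g(s) - s^{2/\alpha}\Gamma(1-2/\alpha)$, with $g(s) := \sum_{k\geq 0} (-s)^k \cdot 2/[k!(2-k\alpha)]$ entire; hence $\mathcal{L}_V(s) = (g(s) - s^{2/\alpha}\Gamma(1-2/\alpha))^K$. A binomial expansion combined with the J.C.P.\ Miller identity applied to $h(r) := g(-r)$ gives $h(r)^{K-\ell} = \sum_{m\geq 0} A_{K-\ell,m}\,r^m$, where the coefficients $A_{i,j}$ satisfy exactly the recursion \eqref{Eqn:CDFCoeffs}. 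Deforming the Bromwich contour for $\bar{F}_V$ around the branch cut of $s^{2/\alpha}$ on $(-\infty,0]$, the boundary value $(-r + i0^+)^{2/\alpha} = r^{2/\alpha} e^{i 2\pi/\alpha}$ makes the imaginary part of $(g(-r) - r^{2/\alpha} e^{i 2\pi/\alpha}\Gamma(1-2/\alpha))^K$ produce the factors $\sin(2\pi\ell/\alpha)$ (the $\ell = 0$ contribution vanishes since $\sin 0 = 0$), and the radial integrals $\int_0^\infty r^{m+2\ell/\alpha-1}e^{-rv}dr = \Gamma(m+2\ell/\alpha)\,v^{-m-2\ell/\alpha}$ deliver a series representation for $\bar{F}_V(v)$.

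Finally, I would substitute $v = u\,\rho_{K+1}^{\alpha/2}$ in that series and integrate term-by-term against the $\mathrm{Gamma}(K+1,1)$ density on $[(K/u)^{2/\alpha},\infty)$. Each monomial $v^{-m-2\ell/\alpha}$ produces
\begin{align*}
\tfrac{1}{K!}\int_{z_0}^\infty y^{K - m\alpha/2 - \ell}\,e^{-y}\, dy \;=\; \tfrac{1}{K!}\,\Gamma\!\bigl(K - m\alpha/2 - \ell + 1,\, z_0\bigr),\qquad z_0 := (K/u)^{2/\alpha},
\end{align*}
and the scalings $u^{-m - 2\ell/\alpha} = x^{-m - 2\ell/\alpha}(\pi\lambda_b)^{m\alpha/2 + \ell}$ and $z_0 = \lambda_b\pi(K/x)^{2/\alpha}$ then reproduce \eqref{Eqn:CDFAvgPathLossLemmaStatement}. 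The hard part will be the Laplace-inversion step: since $\mathcal{L}_V(-r)$ grows like $e^{Kr}$ as $r\to\infty$, the contour deformation is only straightforwardly valid for $v$ sufficiently large and the displayed infinite series is, strictly, an asymptotic expansion; additional care is required to handle the sine-factor bookkeeping and to verify that the coefficients produced by $h(r)^{K-\ell}$ match \eqref{Eqn:CDFCoeffs} term-by-term.
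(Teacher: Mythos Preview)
Your decomposition is exactly the paper's: condition on the $(K{+}1)$-st nearest-neighbor distance, recognize the conditional sum as a scaled sum of $K$ i.i.d.\ Pareto$(2/\alpha)$ variables (your $V=\sum U_i^{-\alpha/2}$ is precisely the paper's $\sum R_{K+1}^\alpha \tilde r_{0i}^{-\alpha}$), then integrate out against the known law of $R_{K+1}$ to produce the incomplete-Gamma factors and the leading Poisson tail. The only substantive difference is that where you propose to derive $\bar F_V$ by Laplace inversion around the branch cut and Miller's recurrence, the paper simply quotes the series for the CDF of a sum of i.i.d.\ Pareto variables from Blum (1970), which already comes packaged with the coefficients $A_{i,j}$ satisfying \eqref{Eqn:CDFCoeffs}. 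Your route has the merit of being self-contained and explaining \emph{why} the recursion \eqref{Eqn:CDFCoeffs} appears (it is exactly the Miller formula for the power-series coefficients of $g(-r)^{K-\ell}$), at the cost of the convergence/asymptotic-expansion subtleties you flag; the paper's route sidesteps that work entirely by citation.
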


{\it Proof:}  Please see Appendix \ref{sec:PathLossSumLemmaProof}.

 While  \eqref{Eqn:SpecEffCDFApprox}  is complicated, it can be evaluated numerically with greater efficiency than running a Monte Carlo simulation for a wide range of parameters. For instance, in the numerical results described in Section \ref{Sec:NumericalResults}, the infinite sum in \eqref{Eqn:SpecEffCDFApprox} is evaluated to only 10 terms, since going beyond 10 terms did not result in an appreciable improvement in accuracy.  

\subsection{Hexagonal Cells}\label{sec:HexCells}

Here, we apply the framework developed in this paper to the hexagonal-grid model of BS locations. We analyze the spectral efficiency of a mobile located at a vertex of a hexagonal cell (the so-called cell-edge user). This case is interesting since the service to edge users is the most demanding and is a frequent test case used in cellular-network analysis. Furthermore, cell-edge mobiles can benefit the most from BS cooperation. Assume such a model with BS density $\lambda_b$ which implies a cell-edge length $d_h = \sqrt{\frac{2}{3\lambda_b\sqrt 3 }}$. We shall offset the grid from the origin such that the origin is located at the vertex of a hexagonal cell, thereby locating the test mobile at the cell-edge. With this setup, $\mathbf{P}_K$ is deterministic and  takes the following values for $K = 1,2, \cdots, 6$ (using simple geometric arguments, one can find expressions for other values of $K$ as well)
\begin{align}
\label{Eqn:PKHex}
\mathbf{P}_K=
\begin{cases}
KLd_h^{-\alpha}, & \mbox{ for } K = 1, 2, 3\\
3Ld_h^{-\alpha}+L(K-3)(2d_h)^{-\alpha}, & \mbox{ for } K = 4, 5, 6\\
\end{cases}
\end{align}

We can use this expression with Theorem \ref{Theorem:Main} to predict the spectral efficiency improvement for a cell-edge mobile using BS cooperation, over no BS cooperation. For instance, the spectral efficiency of a cell-edge mobile
with three cooperating BSs is given by
\begin{align}
\eta \approx \log_2\left(1+\left[\frac{\alpha}{2\pi^2\lambda}\sin\left(\frac{2\pi}{\alpha}\right)\right]^{\frac{\alpha}{2}}(3L)^{\alpha/2}d_h^{-\alpha}\right).
\end{align}
It is interesting to note that this spectral efficiency is equivalent to a non-cooperative BS system with three times as many antennas. Alternatively, the same spectral efficiency can be achieved with no BS cooperation, and a cell radius of $\frac{1}{\sqrt 3}d_h$, which shows how this result can be used to tradeoff between different strategies of improving service to users far away from cell towers.

\section{Insights and implications}


It is convenient to note that $\mathbf{P}_K$ is the average power received by the $K$ co-operating BSs. Thus, we define the asymptotic interference power:
\begin{IEEEeqnarray}{rCl}\label{e: limit interference power}
\sigma^2=\left[\frac{\,\alpha}{2\pi^2\lambda}\sin\left(\frac{2\pi}{\alpha}\right)\right]^{-\frac{\alpha}{2}}(KL)^{1-\alpha/2}
\end{IEEEeqnarray}
and we have $\mbox{SIR}_\mathrm{asym}=\mathbf{P}_K/\sigma^2$.
Equation \eqref{e: limit interference power}  shows that the the number of co-operating BSs, $K$, and the number of antennas per BS have the same effect on the limit interference power, i.e., the limit interference power depends only on the total number of co-operating antennas, for a particular $\mathbf{P}_K$. On the other hand, the average received power, $\mathbf{P}_K$, has a completely different dependence on $L$ and $K$, which depends on the how the BSs are distributed in space through the distances between the test mobile and its BSs, $r_{0,1}, \cdots r_{0,K}$.


To gain a better understanding, we consider a scaling of the network by a factor of $\sqrt{\lambda_b}$, with the origin as a reference point, which results in a unit-density network. Denoting the distance of the $k$-th BS from the origin in the scaled network by $\hat r_{0,k}=r_{0,k}\sqrt{\lambda_b}$,  we define:
\begin{IEEEeqnarray}{rCl}\label{e: def Sk}
\mathbf{\hat S}_K=\frac{ \mathbf{P}_K}{\lambda_b^{\alpha/2} K L}=\frac{1}{K}\sum_{k = 1}^K \hat r_{0,k}^{-\alpha}.
\end{IEEEeqnarray}
Note that $ \mathbf{\hat S}_K$ represents the average power from the $K$ nearest transmitters in a unit density network of BSs. Hence, the asymptotic SIR can be written as:
\begin{align}\label{e: final SIR formula}
\mbox{SIR}_\mathrm{asym}=\left[\frac{\,\alpha \lambda_b}{2\pi^2\lambda}\sin\left(\frac{2\pi}{\alpha}\right)\right]^{\frac{\alpha}{2}}(KL)^{\alpha/2}\mathbf{\hat S}_K
\end{align}
in which $\mathbf{\hat S}_K$ depends only on $K$, and the distances from the origin of the $K$ closest points in a unit density network.  Note that as expected for an interference-limited system, it is the relative density of BSs to mobiles, rather than the specific values of each that matters.

Writing the SIR in these different forms enables us to interpret the effects of a number of different system parameters on the SIR and spectral efficiency in the following subsections.

\subsection{Effect of number of antennas per base station}\label{Sec:NumAnts}

Equation \eqref{e: final SIR formula}  indicates that the asymptotic SIR grows as $L^{\frac{\alpha}{2}}$. Additionally, for a given $K$, if we wish to increase the density of active mobiles while keeping the same asymptotic SIR per mobile, then the number of antennas per BS should scale linearly with $\lambda$. This result is quite expected, as a similar result was obtained for cellular massive
MIMO without BS cooperation \cite{GovCellularNetworks}. Moreover, since $\alpha > 2$ for almost all practical cases, the SIR grows super-linearly with the number of antennas per base station with the MMSE receiver.

\subsection{Effect of number of cooperating BSs} \label{Sec:NumCoopBS}

Since  $\mathbf{P}_K$ is monotonically increasing with $K$, from \eqref{Eqn:SIRAsymPK}, if all other parameters are constant, the asymptotic SIR increases at a rate strictly greater than $K^{\frac{\alpha}{2}-1}$. On the other hand,  $\hat{\mathbf{S}}_K$ is monotonically decreasing with $K$. Thus, with all other parameters constant, \eqref{e: final SIR formula} indicates that the asymptotic SIR grows at a rate strictly less than $K^{\frac{\alpha}{2}}$. Thus, the SIR grows at a rate between $K^{\frac{\alpha}{2}-1}$ and $K^{\frac{\alpha}{2}}$.  From \eqref{e: final SIR formula}, observe that the asymptotic SIR grows as $\left(\frac{\lambda_b}{\lambda}\right)^{\alpha/2}$, and as $L^{\frac{\alpha}{2}}$. Therefore,  the asymptotic SIR will not decrease with mobile density $\lambda$ if the quantity $\frac{\lambda_b L K^{1-\frac{2}{\alpha}}}{\lambda}$ is kept constant by provisioning additional infrastructure (i.e. increasing $K, L,$ or  $\lambda_b$).


%
\subsection{Network optimization} \label{Sec:NetworkOptimization}

The results above give a way to evaluate the network performance and understand the role of each network parameter on the SIR and spectral efficiency in the system. In this subsection, we consider the question of network optimization, i.e., the optimal choice of the network parameters given different network constraints.

\subsubsection{Optimization given the number of co-operating antennas}
One simple measure of the network complexity is the number of co-operating antennas. In CRANs the received signals from many BSs are transferred to a cental processing center. With $K$ co-operating BSs with $L$ antennas each, $KL$ signals need to be transferred to the  processor. If $KL$ is large, this results in significant overhead on the infrastructure. Thus, the number of processed signals, $K L$ is a measure of the complexity in the infrastructure required to share signals from multiple BSs.

Trying to optimize \eqref{e: final SIR formula} with respect to $K$ and $L$ for a given complexity $N=KL$, indicates that it is preferable to increase the number of antennas per base station $L$ than the number of cooperating base stations, given that the SIR grows faster with $L$ compared to $K$, as described in Sections \ref{Sec:NumAnts} and \ref{Sec:NumCoopBS}.  Thus, \eqref{e: final SIR formula} shows that if we are only limited by the number of processed antennas, it is better to have more antennas in each BS, and fewer co-operating BSs.

\subsubsection{Optimization with limited antenna density}

But, the optimization of the previous subsection only takes into account the complexity of sharing signals from the $N$ antennas. As a result, the preferred solution requires a large number of antennas per BS and hence, a large per-base-station hardware cost. To account for this additional cost, we add a constraint on the density of antennas in the network. The antenna density is given by $\lambda_a=\lambda_b L$ antennas per unit area.

In a network with an antenna density of $\lambda_a$, we can write the asymptotic SIR, \eqref{e: final SIR formula}, as:
\begin{align}
\mbox{SIR}_\mathrm{asym}=\left[\frac{\,\alpha}{2\pi^2\lambda}\sin\left(\frac{2\pi}{\alpha}\right)\right]^{\frac{\alpha}{2}}\lambda_a^{\alpha/2}K^{\alpha/2-1}(K\mathbf{\hat S}_K).
\end{align}
Noting that $\alpha>2$ and that $K\mathbf{\hat S}_K $ is monotonically increasing with $K$, we conclude that the asymptotic SIR is monotonically increasing with $K$. Thus, it is better to  have more co-operating BSs with less antennas per BS (opposite from the conclusion in the previous subsection).

To understand the difference between the conclusions of this subsection and the previous one, we note that in both cases, the asymptotic interference power $\sigma^2$ from \eqref{e: limit interference power}, depends only on the total number of co-operating antennas. Thus, the differences result from the optimization of the average received power in each scenario ($ \mathbf{P}_K$). As $\alpha>2$, the most important factor for energy collection is the probability of having a BS which is very close to the test mobile. Thus, if the BSs density ($\lambda_b$) is fixed, increasing the number of antennas per BS increases the received energy from the nearest BS. On the other hand, when the BS density varies, it is best to distribute the BSs as much as possible in order to increase the probability of being close to the test mobile. Hence, in this case it is better to increase $\lambda_b$ and $K$ while decreasing $L$ (provided, of course that $L$ is large enough that the asymptotic expressions still hold).

\subsection{Effect of path-loss exponent}\label{Sec:SpecEffAlpha}

The effect of the path-loss exponent on the asymptotic spectral efficiency can be characterized by noting that for large $L$, \eqref{eqn:AsymSpecEffDef}  can be approximated as
\begin{align}
&\eta_{\text{asym}} \approx \left(\frac{\alpha}{2}-1\right)\log_2(KL)+\log_2(\mathbf{P}_K)+\log_2\left(\left[\frac{\,\alpha}{2\pi^2\lambda}\sin\left(\frac{2\pi}{\alpha}\right)\right]^{\frac{\alpha}{2}}\right). \label{eqn:AsymSpecEffAlpha}
\end{align}
While the right-most two terms on the RHS of \eqref{eqn:AsymSpecEffAlpha} depend on $\alpha$, their values are typically small compared to the first term on the RHS. Thus, the spectral efficiency grows approximately linearly with path-loss exponent which we verify through simulations in Section \ref{Sec:NumericalResults}.  This is consistent with the finding that the SINR increases with $\alpha$ on the downlink in \cite{tanbourgi2014tractable} .


\subsection{Optimal active mobile density} \label{Sec:OptDensity}

In the preceding, we assumed that the density of mobiles is $\lambda$, with all mobiles being active. Suppose that the true density of mobiles is much higher, and at any given time, a subset of these mobiles randomly transmit with a certain probability, such that the density of active mobiles is $\lambda$. This can be the result of random medium access control to limit congestion, for instance. We can optimize the value of $\lambda$ in order to maximize the spectral efficiency density in the network for a given outage constraint. Define the rate density for outage probability $p_o$ as follows.
\begin{align}\label{Eqn:SpecEffDensity}
\rho(p_o) = \lambda \log_2\left(1+\mathbf{P}_{K,\text{out}}\left[\frac{\,\alpha}{2\pi^2\lambda}\sin\left(\frac{2\pi}{\alpha}\right)\right]^{\frac{\alpha}{2}}(KL)^{\alpha/2-1}\right)\,.
\end{align}
Using standard calculus but tedious algebraic manipulation, we find that the density of active mobiles which maximizes the spectral efficiency  for outage probability of $p_o$ is
 \begin{align}\label{Eqn:OptDensity}
\lambda^* = \left[ -\frac{2\Omega\left( -\frac{\alpha}{2} e^{-\alpha/2}\right) } {2\Omega\left(-\frac{\alpha}{2} e^{-\alpha/2}\right)+\alpha} \right]^{2/\alpha} \left(\mathbf{P}_{K,\text{out}}(KL)^{\alpha/2-1}\right)^{\frac{2}{\alpha}}
\end{align}
where $\Omega(z)$ is the Lambert-W function and is the solution to the transcedental equation $we^w = z$. Substituting \eqref{Eqn:OptDensity} for $\lambda$ in \eqref{Eqn:SpecEffDensity}, we find that the spectral efficiency when the density of active transmissions is optimized is proportional to $\left(\mathbf{P}_{K,\text{out}}(KL)^{\alpha/2-1}\right)^{\frac{2}{\alpha}}$.

\section{Numerical Results} \label{Sec:NumericalResults}
To demonstrate the accuracy of the analysis, we conducted Monte Carlo (MC) simulations of the network topology. The simulations included $30,000$ mobiles and $3,000$ BSs in a circular area (i.e., the ratio of the densities is $\lambda_b/\lambda=0.1$. The simulation did not include white noise, and hence the actual values of the densities do not matter. All simulations were performed for a path-loss exponent of $\alpha = 4$. The simulation studied the performance of a mobile in the center of the simulation area over $10,000$ network realizations.

\FigDat{SIR_Line}{0.5}

Fig. \ref{f:SIR_Line_fig} depicts the normalized SIR, defined as
\begin{align}
\frac{\text{SIR}}{L^{\frac{\alpha}{2}-1} \mathbf{P}_K} \label{d:Normalized SIR}
\end{align}
 as a function of the number of antennas per BS for cluster sizes of $K=2, 4$ and $8$ BSs. The error bars show the range of $\pm 1$ standard deviation from the mean, according to the MC simulations. The dashed lines show the predicted values according to Theorem \ref{Lemma:SIR}. The figure shows that in all cases, the theoretical asymptotic prediction is well between the error-bars from the simulations. As the number of antennas gets large, the error-bars become very close (indicating the convergence to a deterministic quantity) and the asymptotic theoretical prediction meet the center of the bars.

\FigDat{fig_rate}{0.5}

So far, the figures showed the accuracy of the results mostly with quite high numbers of antennas. Yet, the results can be very useful even with relatively small number of antennas, if we also take into account the distribution of the received power, $\mathbf{P}_K $. Fig. \ref{f:rate_fig} compares the CDF of the asymptotic spectral efficiencies predicted by Theorem \ref{Theorem:Main}, with the CDF of the asymptotic spectral efficiencies in the MC simulations, for different parameter values as shown in the graph. In all cases, we set $\lambda = 1$, and $\alpha = 4$ for the PPP model of base stations, with different base station densities as depicted in the legend. As can be seen, with $50$ antennas per BS, the simulations deviate from the theory by only fractions of a bps/Hz. (This deviation increases to about $1$bps/Hz for  10 antennas per BS). Thus, the result of Theorem 1 gives a reasonable prediction for the achievable  performance even when the number of antennas is moderately large.

Fig. \ref{f:rate_fig} also highlights some of the conclusions that were drawn from the theoretical results  in Section \ref{Sec:NetworkOptimization}. For instance, consider the left triplet of curves, which correspond to $K L = 200$, all with $\lambda_b = 0.1$. These curves show different cases in which the total number of antennas  used to decode the signal from the test mobiles is identical. As the theory predicts, the figure demonstrates that using fewer base stations with a larger number of antennas per base station results in higher spectral efficiency if there is a restriction on the total number of antennas (200 in this case). Now consider the right triplet of curves, which depict the cases in which the total number of antennas  used to decode the signal is fixed ($K L=800$) but also the antenna density is fixed ($\lambda_b L= 10$). The graph indicates that when the density of antennas is fixed,  using a larger number of cooperating base stations with fewer antennas per base station results in higher spectral efficiencies,  as predicted by the theoretical results. Thus, these results help support the finding that depending on whether there is a restriction on the total number of cooperating antennas, or on the antenna density, the optimal deployment of resources (e.g. antennas per base station, or number of cooperating base stations) is different.

Additionally, as a point of comparison, we have plotted the empirical CDF of the spectral efficiency when a matched filter is used, for the case of 25 antennas per BS and a cluster of 32 BSs, with $\lambda_b = 0.4$. The figure indicates that the matched filter performs a lot worse than the MMSE receiver. The reason for this is that our system is interference limited, and mobiles transmit with equal power resulting in a large variation in the received powers from the mobiles due to location-dependent path loss. The large discrepancy in received signal powers severely impacts the performance of matched-filter receivers (e.g., see \cite{verdu1998multiuser}). This significant performance difference between the matched-filter and MMSE receiver indicates that the MMSE receiver is more attractive than the matched-filter not only in scenarios with moderately large numbers of antennas (as described in \cite{hoydis2013massive}), but also in distributed antenna systems with a large total number of antennas used, but with a moderately large number of antennas per BS.

\FigDat{fig_hex_cell_edge_alpha}{0.65}
Figure \ref{f:rate_alpha_fig} shows the spectral efficiency vs. path-loss exponent for the hexagonal-cell model with a relative base-station to mobile density of 0.1, $L = 50$, and values of $K$ as shown in the legend. As for the case of BSs modeled as a PPP, the simulations match the data very well when the number of antennas per BS is 50. These results help verify that the framework developed here can be used to characterize different BS distributions. Moreover, the linearity of the spectral efficiency with $\alpha$ as predicted in Section \ref{Sec:SpecEffAlpha} through \eqref{eqn:AsymSpecEffAlpha} is confirmed. In addition, observe that for larger values of $K$, the increase in spectral efficiency with $K$ grows as $\log_2(K)$ as predicted by \eqref{eqn:AsymSpecEffAlpha}.


\section{Summary and Conclusions}

In this work, we analyze the uplink of spatially distributed cellular networks with co-operative BSs which have multiple antennas. Mobiles are assumed to be distributed according to HPPPs on the plane, and BSs are at arbitrary positions.  We assume that the $K$ BSs closest in Euclidian distance to a test  mobile co-operate to detect the transmitted signal from the mobile, in the presence of co-channel interferers. Assuming Gaussian codebooks used by all mobiles, we find an asymptotic expression for the spectral efficiency on a test link given in \eqref{eqn:AsymSpecEffDef}. This result provides a simple expression for the spectral efficiency that can be achieved in co-operative cellular systems, as a function of tangible system parameters such as density of mobiles, number of co-operating BSs and number of antennas per BS. We apply these results to two specific BS distribution models, namely for BSs distributed as a HPPP and on a hexagonal grid. For the HPPP model, we find the CDF of the spectral efficiency in \eqref{Eqn:SpecEffCDFApprox} which provides the probability for a link being in outage. For the hexagonal grid model, we find expressions for the spectral efficiency of cell-edge mobiles with different numbers of cooperating BSs.

Several insights can be gained from these results. E.g., we characterize the rate at which the SIR grows with the number of base station antennas $L$, cooperating base stations $K$, and base station density $\lambda_b$.  We find that with increasing mobile density $\lambda$, at least the same per-link SIR can be supported with high probability if $K, L$ and/or $\lambda_b$ are increased in such a way that $\frac{\lambda_b}{\lambda} L K^{1-\frac{2}{\alpha}}$ is constant.  We additionally characterize the performance improvement when cell-edge mobiles use base station cooperation by providing asymptotically exact expressions for the  spectral efficiency of a cell-edge mobile. We also find the optimal density of active mobile transmissions to maximize spectral efficiency density, which balances between the interference level, and increased rate density caused by increasing the density of active links. As such, these results provide insight to system designers to make decisions on infrastructure deployment to support a given density of mobiles.1

In future work, the effects of thermal noise, and channel estimation errors should be considered. Since this work indicates the spectral efficiencies that can be achieved if these effects can be kept small enough, further study is required to determine the resources required to achieve sufficiently accurate channel estimates, and the impact of inaccurate channel estimation.

Overall, given the challenges of analytically studying the uplink of spatially distributed co-operative base-station systems which can provide high spectral efficiencies but with significant costs, we expect this result to be useful as it provides  simple expressions for the spectral efficiency which capture the effects of the main system parameters.

\section{Acknowledgements}

We thank the anonymous reviewers and editor for helpful comments which have resulted in significant improvements to this paper.

\begin{appendices}

\section{Proof of Lemma \ref{L: original proof with c}}\label{App: proof of sufficinecy}

We shall use a sandwiching argument to prove that \eqref{e: original limit with c} is a sufficient condition for \eqref{e: main theorem sir}.
First we write an upper bound to the SIR which is based on considering only the first $n$ mobiles in the network, as compared to SIR which includes an infinite number of interfering mobiles
\begin{align}\label{e: inf is higher than n}
\frac{\mbox{SIR}}{(KL)^{\alpha/2-1}\mathbf{P}_K}  &\leq \frac{\mbox{SIR}[n]}{(KL)^{\alpha/2-1}\mathbf{P}_K}
\end{align}
By applying \eqref{e: original limit with c} and recalling that we set $n=cLK$, the following holds in probability:
\begin{IEEEeqnarray}{rCl}
\lim_{L\to\infty}\frac{\mbox{SIR}}{(KL)^{\alpha/2-1}\mathbf{P}_K} &\leq& \inf_{c>2} \lim_{L\to\infty}\frac{\mbox{SIR}[n]}{(KL)^{\alpha/2-1}\mathbf{P}_K } \le \lim_{c\to\infty}\lim_{L\to\infty}\frac{\mbox{SIR}[n]}{(KL)^{\alpha/2-1}\mathbf{P}_K }
\nonumber \\&=& \left[\frac{\alpha}{2\pi^2\lambda}\sin\left(\frac{2\pi}{\alpha}\right)\right]^{\frac{\alpha}{2}}. \label{eqn:SIRUBLimit}
\end{IEEEeqnarray}
where the first inequality holds because \eqref{e: inf is higher than n} holds for every $c$.

\newcommand{\emm}{{u}}

Next, consider the MMSE estimator for a network with only $n$ mobile nodes, $\mathbf{c}[n]$. If this estimator is applied to the network with all the mobiles, the resulting SIR, which we define as  $\mbox{\underline{SIR}}[n]$, will be a lower bound to the SIR with the full MMSE estimator. Hence,
\begin{align}\label{e: SIR finite2}
\text{SIR} \geq &\mbox{\underline{SIR}}[n] = \lim_{\emm\to\infty}\frac{\left|\mathbf{c}[n]^\dagger\mathbf{h}_0\right|^2}{\mathbf{c}[n]^\dagger \left(\mathbf{H}[\emm]\mathbf{H}^\dagger[\emm]\right)\mathbf{c}[n]} =\lim_{\emm\to\infty}\frac{\left|\mathbf{c}[n]^\dagger\mathbf{h}_0\right|^2}{\mathbf{c}[n]^\dagger \left(\mathbf{H}[n]\mathbf{H}^\dagger[n]+\mathbf{\Delta}_n[\emm]\mathbf{\Delta}_n^\dagger[\emm]\right)\mathbf{c}[n]}
\notag \\
&=\lim_{\emm\to\infty}\frac{\left|\mathbf{c}[n]^\dagger\mathbf{h}_0\right|^2}{\mathbf{c}[n]^\dagger \mathbf{H}[n]\mathbf{H}^\dagger[n]\mathbf{c}[n]\left(1+\frac{\mathbf{c}^\dagger[r]\mathbf{\Delta}_n[\emm]\mathbf{\Delta}_n^\dagger[\emm]\mathbf{c}[n]}{\mathbf{c}[n]^\dagger \mathbf{H}[n]\mathbf{H}^\dagger[n]\mathbf{c}[n]}\right)}
=\frac{\mbox{SIR}[n]}{1+\frac{\lim_{\emm\to\infty}\mathbf{c}^\dagger[n]\mathbf{\Delta}_n[\emm]\mathbf{\Delta}_n^\dagger[\emm]\mathbf{c}[n]}{\mbox{SIR}[n]}}
\end{align}
where $\mathbf{\Delta}_n[\emm] = \left[\mathbf{h}_{n+1} | \cdots | \mathbf{h}_\emm \right]$. Using again $n=cKL$, we can write:
\begin{IEEEeqnarray}{rCl}\label{eqn: vanishing denominator}
\lim_{L\to\infty}\frac{\text{SIR}}{(KL)^{\alpha/2-1} \mathbf{P}_K}&\ge&\sup_{c>2}\lim_{L\to\infty}\frac{(KL)^{1-\alpha/2} \mathbf{P}_K^{-1}\mbox{SIR}[n]}{1+\frac{\lim_{\emm\to\infty}\mathbf{c}^\dagger[n]\mathbf{\Delta}_n[\emm]\mathbf{\Delta}_n^\dagger[\emm]\mathbf{c}[n]}{\mbox{SIR}[n]}} \nonumber\\
&\ge &\lim_{c\to\infty}\lim_{L\to\infty}\frac{(KL)^{1-\alpha/2} \mathbf{P}_K^{-1}\mbox{SIR}[n]}{1+\frac{(KL)^{1-\alpha/2} \mathbf{P}_K^{-1}\lim_{\emm\to\infty}\mathbf{c}^\dagger[n]\mathbf{\Delta}_n[\emm]\mathbf{\Delta}_n^\dagger[\emm]\mathbf{c}[n]}{(KL)^{1-\alpha/2} \mathbf{P}_K^{-1}\mbox{SIR}[n]}}
\nonumber\\
&= &\frac{\left[\frac{\alpha}{2\pi^2\lambda}\sin\left(\frac{2\pi}{\alpha}\right)\right]^{\frac{\alpha}{2}}}{1+\frac{\lim_{c\to\infty}\lim_{L\to\infty}(KL)^{1-\alpha/2} \mathbf{P}_K^{-1}\lim_{\emm\to\infty}\mathbf{c}^\dagger[n]\mathbf{\Delta}_n[\emm]\mathbf{\Delta}_n^\dagger[\emm]\mathbf{c}[n]}{\left[\frac{\alpha}{2\pi^2\lambda}\sin\left(\frac{2\pi}{\alpha}\right)\right]^{\frac{\alpha}{2}}}}.
\end{IEEEeqnarray}
Recalling that $\mathbf{P}_K /L=\sum_{k = 1}^K r_{0,k}^{-\alpha}<\infty$, the following lemma shows that the  denominator of the RHS of \eqref{eqn: vanishing denominator} converges to unity.
\begin{lemma}\label{lemma:ConvergenceOfQuadraticForm}
\begin{align}
& \lim_{c\to\infty}\lim_{L\to\infty}\lim_{\emm\to\infty} L^{-\alpha/2} \mathbf{c}^\dagger[n]\mathbf{\Delta}_n[\emm]\mathbf{\Delta}_n[\emm]^\dagger \mathbf{c}[n] = 0 \;\;\;\;\text{in probability.
} \label{Eqn:DenConv}
\end{align}
\end{lemma}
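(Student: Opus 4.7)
The strategy is to bound the conditional first moment of the sum and invoke Markov's inequality. I would set $\beta=1$ in \eqref{Eqn:MMSEFilt}; since the scalar cancels in $\mbox{SIR}$ and in the ratio appearing in \eqref{eqn: vanishing denominator}, this is without loss. Expand
\[
\mathbf{c}^\dagger[n]\mathbf{\Delta}_n[\emm]\mathbf{\Delta}_n^\dagger[\emm]\mathbf{c}[n]=\sum_{i=n+1}^{\emm}\bigl|\mathbf{c}^\dagger[n]\mathbf{h}_i\bigr|^2,
\]
and pass $\emm\to\infty$ inside expectations by monotone convergence. Conditioning on the $\sigma$-algebra $\mathcal{F}$ generated by the mobile point process and by the fading vectors $\{g_{i,\cdot}\}_{i\le n}$ makes $\mathbf{c}[n]$ measurable while leaving each $\mathbf{h}_i$, $i>n$, conditionally zero-mean Gaussian with covariance $\mathrm{diag}(r_{i,1}^{-\alpha}I_L,\ldots,r_{i,K}^{-\alpha}I_L)$. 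Block-decomposing $\mathbf{c}[n]=(\mathbf{c}_1^{\top},\ldots,\mathbf{c}_K^{\top})^{\top}$ into the $L$-subvectors corresponding to the $K$ cooperating BSs then yields
\[
\mathbb{E}\bigl[|\mathbf{c}^\dagger[n]\mathbf{h}_i|^2\bigm|\mathcal{F}\bigr]=\sum_{j=1}^{K}r_{i,j}^{-\alpha}\|\mathbf{c}_j\|^2\le\|\mathbf{c}[n]\|^2\sum_{j=1}^{K}r_{i,j}^{-\alpha},
\]
and for indices $i$ with $r_i>2\max_{j\le K}r_{0,j}$, the triangle inequality gives $\sum_{j}r_{i,j}^{-\alpha}\le K\cdot 2^{\alpha}\,r_i^{-\alpha}$.

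Next, I would bound the two surviving random factors in high probability. Lemma~\ref{lemma:MinEval} provides $\gamma_{\min}(\mathbf{H}[n]\mathbf{H}^\dagger[n])\ge\gamma_{lb}L^{1-\alpha/2}$ for large $L$, so
\[
\|\mathbf{c}[n]\|^2=\mathbf{h}_0^\dagger(\mathbf{H}[n]\mathbf{H}^\dagger[n])^{-2}\mathbf{h}_0\le\frac{L^{\alpha-2}}{\gamma_{lb}^{2}}\,\|\mathbf{h}_0\|^2,
\]
and the strong law of large numbers applied to each of the $K$ length-$L$ blocks of $\mathbf{h}_0$ gives $L^{-1}\|\mathbf{h}_0\|^2\to\mathbf{P}_K/L=\sum_{j}r_{0,j}^{-\alpha}$ in probability, hence $\|\mathbf{c}[n]\|^2=O(L^{\alpha-1})$. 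For the tail sum, the mapping theorem identifies $\pi\lambda r_i^2$ in distribution with $\sum_{j=1}^{i}E_j$ for $E_j$ i.i.d.\ $\mathrm{Exp}(1)$, giving $\mathbb{E}[r_i^{-\alpha}]=(\pi\lambda)^{\alpha/2}\Gamma(i-\alpha/2)/\Gamma(i)\sim(\pi\lambda/i)^{\alpha/2}$. Since $\alpha>2$, the expected tail $\mathbb{E}\bigl[\sum_{i>n}r_i^{-\alpha}\bigr]$ is of order $n^{1-\alpha/2}$, and Markov's inequality yields $\sum_{i>n}r_i^{-\alpha}=O(n^{1-\alpha/2})$ in probability; with $n=cKL$ this reads $O((cKL)^{1-\alpha/2})$.

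Combining these bounds on the intersection $E_L$ of the eigenvalue event of Lemma~\ref{lemma:MinEval} and the PPP-tail event (both of probability tending to one), the exponents of $L$ cancel and the $L^{-\alpha/2}$-normalized conditional expectation is at most $C(cK)^{1-\alpha/2}$ for a constant $C$ independent of $L$. Markov's inequality on $E_L$ together with $\Pr(E_L^c)\to 0$ then gives $\limsup_{L\to\infty}\Pr(Y>\epsilon)\le C(cK)^{1-\alpha/2}/\epsilon$, which vanishes as $c\to\infty$ since $1-\alpha/2<0$, establishing \eqref{Eqn:DenConv}. The hard part will be the coupling between $\|\mathbf{c}[n]\|^2$ and $\sum_{i>n}r_i^{-\alpha}$ through the mobile positions: both factors are random functions of the same Poisson process, so the argument must proceed on the intersection of two overlapping high-probability events rather than by simply factoring expectations, and some care is needed in choosing those events uniformly over $c$ large.
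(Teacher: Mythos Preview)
Your strategy is essentially the same as the paper's: expand $\epsilon[n]=\sum_{m>n}|\mathbf{h}_m^\dagger\mathbf{c}[n]|^2$, block-decompose across the $K$ BSs, bound $\|\mathbf{c}[n]\|^2$ via Lemma~\ref{lemma:MinEval} to get the $L^{\alpha-1}$ scaling, bound the tail path-loss sum by $O(n^{1-\alpha/2})$, and let the $(cK)^{1-\alpha/2}$ factor kill everything as $c\to\infty$. The one genuine difference is how the fading for $m>n$ is handled: you take a conditional first moment over the future $g_{m,\cdot}$ and then apply Markov, whereas the paper keeps the $|q_{m,k}|^2$ weights and invokes a pathwise result from \cite{george2013ergodic} giving $n^{\alpha/2-1}\sum_{m>n}(r_m-R_K)^{-\alpha}|q_{m,k}|^2\to 2(\pi\lambda)^{\alpha/2}/(\alpha-2)$. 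Your route is more self-contained (no external lemma) at the price of tracking the high-probability events explicitly; the paper's route gives sharper constants and avoids the Markov step.

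One loose end to tighten: you assert that the ``PPP-tail event'' $\{\sum_{i>n}r_i^{-\alpha}\le M n^{1-\alpha/2}\}$ has probability tending to one as $L\to\infty$, but Markov by itself only gives $\Pr(\cdot)\ge 1-C''/M$, which is bounded away from~1 for fixed $M$. The fix is easy and is in fact what the paper exploits: $n^{\alpha/2-1}\sum_{i>n}r_i^{-\alpha}$ actually converges (a.s.) to $2(\pi\lambda)^{\alpha/2}/(\alpha-2)$ via the representation $\pi\lambda r_i^2=\sum_{j\le i}E_j$ and the SLLN, so any $M$ exceeding that constant yields an event of probability $\to 1$. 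Alternatively, take $M=M(c)\to\infty$ slowly enough that $M(c)(cK)^{1-\alpha/2}\to 0$. Either patch closes the argument.
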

{\it Proof:} Please see Appendix \ref{Sec:ConvergenceOfQuadraticFormProof}.

Hence the following holds in probability,
\begin{align}
\lim_{L\to\infty}\frac{\text{SIR}}{(KL)^{\alpha/2-1} \mathbf{P}_K} &\geq   \left[\frac{\alpha}{2\pi^2\lambda}\sin\left(\frac{2\pi}{\alpha}\right)\right]^{\frac{\alpha}{2}}\,.
\end{align}
Comparing to  \eqref{eqn:SIRUBLimit} we observe that the upper and lower bounds converge to the same constant in probability. Hence we
conclude that \eqref{e: main theorem sir} holds in probability.

\section{Proof of Lemma \ref{lemma:MinEval}}\label{sec:MinEvalLemmaProof}

First lets define $M = 2KL$. As $c>2$, we have $n = cKL > M$, and by the Weyl inequality (see e.g. \cite{HornJohnson}),
\begin{align}
\gamma_{\text{min}} \left(M^{\alpha/2-1}\mathbf{H}[n]\mathbf{H}^\dagger[n]\right)\label{eqn:MinEvalWeylInit}\ge\gamma_{\text{min}} \left(M^{\alpha/2-1}\mathbf{H}[M]\mathbf{H}^\dagger[M]\right)
\end{align}
whereby,
\begin{align}
\mathbf{H}[M]\mathbf{H}^\dagger[M] &= \sum_{i = 1}^M \mathbf{h}_i \mathbf{h}_i^\dagger =  \sum_{i = 1}^M \left(r_{M+1}+R_{K+1}\right)^{-\alpha} \mathbf{g}_i\mathbf{g}_i^\dagger + \left[\sum_{i = 1}^M (\mathbf{h}_i \mathbf{h}_i^\dagger-\left(r_{M+1}+R_{K+1}\right)^{-\alpha}\mathbf{g}_i\mathbf{g}_i^\dagger)\right] \label{eqn:SubCovarianceMatrix}
\end{align}
where $\mathbf{g}_i = (g_{i1}, g_{i2}, \cdots, g_{iN})^T$. Recall the definition of $\mathbf{h}_i$ from \eqref{e: def hi}.
Since $r_{i,j}< r_{M+1}+R_{K+1} $ for $i = 1, 2, \cdots, M$ and $j = 1, 2, \cdots, K$ the matrix in the brackets on the RHS of \eqref{eqn:SubCovarianceMatrix} is non-negative definite \wpOne. Hence, by the Weyl inequality (see e.g. \cite{HornJohnson})
\begin{align}\label{eqn:MinEvalWeyl}
\gamma_{\text{min}} \left\{\mathbf{H}[M]\mathbf{H}^\dagger[M]\right\}
\ge   \gamma_{\text{min}} \left\{\sum_{i = 1}^M \left(r_{M+1}+R_{K+1}\right)^{-\alpha} \mathbf{g}_i\mathbf{g}_i^\dagger\right\}
\end{align}
From \cite{BaiSmallestEval}, the following is known to hold \wpOne:
\begin{align}\label{eqn:SmallestEvalStdCov}
\lim_{M\to\infty}  \gamma_{\text{min}} \left\{\sum_{i = 1}^M  \frac1M\mathbf{g}_i\mathbf{g}_i^\dagger\right\} = \left(1-\sqrt{\frac{1}{2}}\right)^2.
\end{align}
Additionally, since $r_{M+1}$ is a scaled $\chi^2$ random variable with $2M+2$ degrees of freedom (see e.g. \cite{Mathai} Equation 2.4.4), the following holds in the mean-square sense
\begin{align}
\lim_{M\to\infty} \frac{{\pi\lambda}r_{M+1}^2}{M+1} = 1\,.
\end{align}
Additionally, since $R_{K+1}$ is finite, the following holds in the mean-square sense,
\begin{align}\label{eqn:OuterPathLossConvergence}
\lim_{M\to\infty}\left(r_{M+1}+R_{K+1}\right)^{-\alpha}\left(\frac{M}{\pi\lambda}\right)^{\alpha/2} = 1
\end{align}
Since mean-square convergence and convergence \wpOne both imply convergence in probability, combining \eqref{eqn:SmallestEvalStdCov} and \eqref{eqn:OuterPathLossConvergence}, we have
\begin{align}
\lim_{M\to\infty} M^{\alpha/2-1}\gamma_{\text{min}}\left\{\left(r_{M+1}+R_{K+1}\right)^{-\alpha} \sum_{i = 1}^M  \mathbf{g}_i\mathbf{g}_i^\dagger\right\} = \left(\pi\lambda\right)^{\alpha/2} \left(1.5-\sqrt{2}\right)
\end{align}
in probability.
Combining this expression with \eqref{eqn:MinEvalWeyl} and \eqref{eqn:MinEvalWeylInit}  yields the following in probability
\begin{align}
\lim_{L\to\infty} \gamma_{\text{min}} \left\{M^{\alpha/2-1}\mathbf{H}[n]\mathbf{H}^\dagger[n]\right\}\ge\left(\pi\lambda\right)^{\alpha/2} \left(1.5-\sqrt{2}\right).
\end{align}
Recalling also that $M=2 K L$ leads to \eqref {e: eq lemma 2} and completes the proof.

\section{Proof of Lemma \ref{lemma:LimitNormSIR}} \label{Proof of LimitNormSIR}
To prove this lemma, we use a prior result from \cite{BaiSilversteinMIMOCDMA}. In order to use this result, we define  $q_{i,j}$ as i.i.d. random variables taking values of $\pm 1$ with equal probability, and $\bar{g}_{i,j} = g_{i,j}/q_{i,j}$. Note that $\bar{g}_{i,j}$ are still i.i.d $\mathcal{CN}(0,1)$ random variables. We thus have
\begin{align}\label{e: def hbar}
\mathbf{h}_i = &\left(\bar{g}_{i,1}\cdot  q_{i, 1} \cdot  r_{i, 1}^{-\alpha/2},\cdots,  \bar{g}_{i,L}\cdot q_{i ,1}\cdot r_{i ,1}^{-\alpha/2},  \bar{g}_{i,L+1} \cdot q_{i,2}\cdot r_{i,2}^{-\alpha/2},\cdots,\right. \nonumber \\
 &\;\;\;\;\;\;\;\;\;\;\;\;\;\;\;\;\;\;\;\;\;\;\;\;\;\;\;\;\;\;\;\;\;\;\;\;\left. \bar{g}_{i,2L}\cdot q_{i,2}\cdot r_{i,2}^{-\alpha/2}, \cdots,  \bar{g}_{i,LK}\cdot q_{i,K}\cdot r_{i,K}^{-\alpha/2}  \right)^T\,.
\end{align}

Next we define the empirical distribution function (e.d.f.) of   $N^{\alpha/2}|q_{i,j}r_{i,j}^{-\alpha/2}|^2 = N^{\alpha/2}r_{i,j}^{-\alpha}$ as the function $H_N(\tau_1, \cdots, \tau_K)$ below
\begin{align}
H_N(\tau_1, \cdots, \tau_K) = \frac{1}{n} \sum_{i = 1}^n 1_{\{N^{\alpha/2}r_{i,1}^{-\alpha} \leq \tau_1, \cdots, N^{\alpha/2}r_{i,K}^{-\alpha} \leq \tau_K\}}.
\end{align}
Hence, $H_N(\tau_1, \cdots, \tau_K)$ measures the proportion of the $n$ mobiles for which the quantities $N^{\alpha/2}r_{i,1}^{-\alpha} \leq \tau_1, \cdots, N^{\alpha/2}r_{i,K}^{-\alpha}$ are less than or equal to $\tau_1, \cdots, \tau_K$, respectively. From Corollary 1.2 of  \cite{BaiSilversteinMIMOCDMA} if the following two conditions are true,
 \begin{enumerate}
\item  $H_N(\tau_1, \cdots, \tau_K)$ converges \wpOne to a $K$ dimensional, non-random function $H(\tau_1, \tau_2, \cdots, \tau_K)$\,,
\item For all $\ell \neq \ell'$, and all positive $\nu_1, \cdots, \nu_K$, the following holds \wpOne,
 \begin{align}
\lim_{n\to\infty} \frac{1}{n-1} \sum_{i = 1}^n \frac{q_{i,\ell}{q_{i, \ell'}N^{\alpha/2}r_{i,\ell}^{-\alpha/2}r_{i,\ell'}^{-\alpha/2}}}{1+\sum_{j= 1}^K \nu_j N^{\alpha/2}r_{i,j}^{-\alpha}} = 0\,,
\end{align}
 \end{enumerate}
then, with probability 1,
\begin{align}
\lim_{L\to\infty} \frac{1}{L }\mathbf{h}_0^\dagger \left(\frac1L N^{\alpha/2}\mathbf{H}[n]\mathbf{H}^\dagger[n]+\zeta\mathbf{I}\right)^{-1}\mathbf{h}_0 
&= \sum_{i = 1}^K r_{0,i}^{-\alpha} a_i\,, \label{Eqn:partialSIRLimit}
\end{align}
where $a_i$ is given uniquely by the set of $a_i$s satisfying the following equations for $i = 1, 2, \cdots, K$.
\begin{align}
a_i = \frac{1}{\frac{n}{L}E\left[\frac{\tau_i}{1+\sum_{j = 1}^K a_j \tau_j}\right]+\zeta}. \label{eqn:implicitsol}
\end{align}
The $\tau_i$ terms are random variables with joint CDF $H(\tau_1, \tau_2, \cdots, \tau_K)$.

The convergence of the e.d.f. of $N^{\alpha/2}r_{i,j}^{-\alpha}$ required for condition 1) above is proved in the first part of  the following lemma. The second part is used to evaluate \eqref{eqn:implicitsol}.
\begin{lemma}\label{lemma:EDF}
As $N\to\infty$, $H_N(\tau_1, \cdots, \tau_K)$, converges \wpOne to  a joint CDF \\
$H(\tau_1, \tau_2, \cdots, \tau_K)$. Moreover, if $\tau_1, \tau_2, \cdots, \tau_K$ are random variables with joint CDF $H(\tau_1, \cdots, \tau_K)$, the CDF of  $\tau = \frac{ \tau_1+\tau_2+\cdots+ \tau_K}{K}$,  equals $G(\tau)$ given below
\begin{align}
G(\tau)=1-\frac{\pi\lambda}{c}\left(\tau\right)^{-\frac{2}{\alpha}}\Ind{\left(\frac{\pi\lambda_p}{c}\right)^{\alpha/2}
< \tau }. \label{eqn24}
\end{align}
\end{lemma}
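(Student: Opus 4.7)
The main idea is that mobiles with large index are far from the origin, hence far from all $K$ cooperating BSs (which lie in the disk of radius $R_{K+1}$ around the origin). For such mobiles the distances $r_{i,1},\ldots,r_{i,K}$ are essentially equal to $r_i$, so $H_N$ concentrates asymptotically on the diagonal $\{\tau_1=\cdots=\tau_K\}$ of $\mathbb{R}_+^K$. The joint limit $H$ is therefore determined by a single scalar marginal, and the average $\tau=(\tau_1+\cdots+\tau_K)/K$ trivially has the same CDF as that marginal.

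To make the diagonal concentration precise, I would invoke the almost-sure asymptotic for the $i$th-order distance in an HPPP, $\pi\lambda r_i^2/i\to 1$ w.p.1 as $i\to\infty$, combined with the triangle-inequality bound $|r_{i,j}-r_i|\le R_{K+1}$ to conclude $r_{i,j}/r_i\to 1$ w.p.1 uniformly in $j\in\{1,\ldots,K\}$. Consequently, for every $\epsilon>0$ there exists a (random, finite) $i_0$ such that for all $i>i_0$,
\begin{align*}
\max_{j}\left|N^{\alpha/2}r_{i,j}^{-\alpha}-N^{\alpha/2}r_i^{-\alpha}\right|\le \epsilon\cdot N^{\alpha/2}r_i^{-\alpha}.
\end{align*}
Since mobiles with $i\le i_0$ contribute only $i_0/n=O(1/N)$ to $H_N$, the joint e.d.f.\ has the same w.p.1 limit as that of the diagonal sequence $(N^{\alpha/2}r_i^{-\alpha},\ldots,N^{\alpha/2}r_i^{-\alpha})$, which establishes the support of $H$ on the diagonal.

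For the marginal, substitute $r_i^{-\alpha}\sim (\pi\lambda/i)^{\alpha/2}$ and use the rescaled index $u=i/N\in(0,c]$; the empirical measure of $u$ over $i=1,\ldots,n=cN$ converges weakly, w.p.1, to the uniform measure on $[0,c]$ by a Riemann-sum argument. This yields
\begin{align*}
\lim_{N\to\infty}\frac{1}{n}\sum_{i=1}^{n}\mathbf{1}\{N^{\alpha/2}r_i^{-\alpha}\le\tau\}
=\frac{1}{c}\int_{0}^{c}\mathbf{1}\{u\ge \pi\lambda\tau^{-2/\alpha}\}\,du
=\max\!\left\{0,\,1-\tfrac{\pi\lambda}{c}\tau^{-2/\alpha}\right\},
\end{align*}
which agrees with the stated $G(\tau)$: the indicator in \eqref{eqn24} picks out the regime $\tau\ge(\pi\lambda/c)^{\alpha/2}$ on which the right-hand side is nonnegative.

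The main technical obstacle is upgrading these convergences to the almost-sure mode required by Corollary~1.2 of \cite{BaiSilversteinMIMOCDMA}, rather than merely in probability. This needs the almost-sure rate of growth of $r_i$ combined with a Glivenko--Cantelli-style uniform control over the empirical CDF, and verifying that the $K$-dimensional concentration onto the diagonal preserves the almost-sure mode. All remaining steps are routine Riemann-sum estimates and applications of the triangle inequality.
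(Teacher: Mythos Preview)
Your proposal is correct and rests on the same two ingredients the paper uses: the triangle-inequality sandwich $|r_{i,j}-r_i|\le R_K$ (you write $R_{K+1}$, which is an equally valid bound) and the almost-sure growth $\pi\lambda r_i^2/i\to 1$. The packaging, however, differs in a way worth noting.

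The paper's proof passes to a \emph{random} reordering $\tilde r_1,\ldots,\tilde r_n$ of the $n$ nearest mobiles and conditions on $r_{n+1}$, so that the $\tilde r_i$ are i.i.d.\ uniform in a disk. It then computes the exact conditional CDF $\Pr(N^{\alpha/2}\tilde r_{i,j}^{-\alpha}\le\tau)$, sandwiches via $\tilde r_i\pm R_K$, and lets $r_{n+1}^2/n\to 1/(\pi\lambda)$; convergence of the e.d.f.\ then comes from the SLLN applied to the i.i.d.\ indicators. For the second part the paper repeats the whole sandwich for the sum $\sum_j r_{i,j}^{-\alpha}$.

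Your route stays with the \emph{ordered} distances and exploits monotonicity: the set $\{i:N^{\alpha/2}r_i^{-\alpha}\le\tau\}$ is an interval $\{i\ge i_*\}$, so the e.d.f.\ reduces to a single count, and your Riemann-sum display is really the identification $i_*/n\to(\pi\lambda/c)\tau^{-2/\alpha}$. Your explicit observation that $H$ is supported on the diagonal $\{\tau_1=\cdots=\tau_K\}$ is a structural fact the paper never states, and it makes the second part a one-liner (the average equals each coordinate), whereas the paper redoes the sandwiching computation from scratch for the sum. The paper's i.i.d.\ reformulation, on the other hand, gives the almost-sure convergence of $H_N$ directly from the classical SLLN without your appeal to Glivenko--Cantelli-type uniformity, so each approach trades a little work in one place for a little in another.
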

{\it Proof:} Please see Appendix \ref{sec:EDFDistLemmaProof}.

Condition 2)  can be proved as follows.
 \begin{align}
 \left|\frac{1}{n-1} \sum_{i = q}^n \frac{q_{i,\ell}{q_{i, \ell'}N^{\alpha/2}r_{i,\ell}^{-\alpha/2}r_{i,\ell'}^{-\alpha/2}}}{1+\sum_{j= 1}^K \nu_j N^{\alpha/2}r_{i,j}^{-\alpha}}\right| &\leq \left|\frac{1}{n-1} \sum_{i = q}^n \frac{q_{i,\ell}{q_{i, \ell'}\left(\max_{\ell}\{r_{i,\ell}\}\right)^{-\alpha}}}{\sum_{j= 1}^K \nu_j r_{i,j}^{-\alpha}}\right| \leq \nonumber \\
\left|\frac{1}{n-1} \sum_{i = q}^n \frac{q_{i,\ell}{q_{i, \ell'}\left(\max_{\ell}\{r_{i,\ell}\}\right)^{-\alpha}}}{\min_{k}\{\nu_k\}\sum_{j= 1}^K  r_{i,j}^{-\alpha}}\right| &\leq
   \frac{1}{\min_{k}\{\nu_k\}}\left|\frac{1}{n-1} \sum_{i = q}^n q_{i,\ell}q_{i, \ell'}\right| \label{Eqn:MinWeightSum}
\end{align}
\eqref{Eqn:MinWeightSum} holds because the summation in the denominator of the term on the left-hand side (LHS) is over all $j$, and hence must include the term $\max_\ell \{r_{i,\ell}^{-\alpha}\}$. As $n,N\to\infty$, for $\ell \neq \ell'$, the RHS above converges to zero \wpOne  by the strong law of large numbers.

To find  $a_i$, observe that  by symmetry, the fact that \eqref{eqn:implicitsol} has a unique solution leads to $a_1 = a_2 = \cdots = a_K = a$. Thus, we can write  \eqref{eqn:implicitsol} as
\begin{align}
a &= \frac{1}{\frac{n}{L}E\left[\frac{\tau_i}{1+a\sum_{j = 1}^K \tau_j}\right]+\zeta} = 
\frac{1}{\frac{n}{L}E\left[\frac{\tau}{1+aK\tau}\right]+\zeta} \label{Eqn:CompactExpressiona}
\end{align}
where $\tau$ is as defined in Lemma \ref{lemma:EDF}. Using the symmetry of the $a_i$ again in  \eqref{Eqn:partialSIRLimit} gives:\begin{align}
\lim_{L\to\infty} \frac1L\mathbf{h}_0^\dagger \left(\frac1LN^{\alpha/2}\mathbf{H}[n]\mathbf{H}^\dagger[n]+\zeta\mathbf{I}\right)^{-1}\mathbf{h}_0 &= a \sum_{i = 1}^K r_{0,i}^{-\alpha}\label{Eqn:SIRConvergenceImplicit} 
\end{align}
Recalling that $N = LK$ and $c = n/N$, if we rewrite  \eqref{Eqn:CompactExpressiona} as
\begin{align}
(aK)cE\left[\frac{\tau}{1+(aK)\tau}\right]+a\zeta = 1\,,
\end{align}
we can use a prior result from \cite{govindasamy13Correlated} where the uplink of non-co-operative BSs was analyzed (i.e. the case when $K=1$). From \cite{govindasamy13Correlated}, it is known that the following holds
\begin{align}
\lim_{c\to \infty}\lim_{\zeta\to 0} \;aK = \left[\frac{\,\alpha}{2\pi^2\lambda}\sin\left(\frac{2\pi}{\alpha}\right)\right]^{\frac{\alpha}{2}}\,.
\end{align}
Taking the limits as  $\zeta \to 0$  followed by $c\to\infty$ of \eqref{Eqn:SIRConvergenceImplicit} and substituting the previous expression,
and using $N = KL$,
\begin{align}
\lim_{c\to \infty}\lim_{\zeta\to 0} \lim_{L\to\infty} \frac{K^{1-\frac{\alpha}{2}}}{L}\mathbf{h}_0^\dagger \left(L^{\frac{\alpha}{2}-1}\mathbf{H}[n]\mathbf{H}^\dagger[n]+\zeta\mathbf{I}\right)^{-1}\mathbf{h}_0 &= \left[\frac{\,\alpha}{2\pi^2\lambda}\sin\left(\frac{2\pi}{\alpha}\right)\right]^{\frac{\alpha}{2}} \sum_{i = 1}^K r_{0,i}^{-\alpha}. \label{Eqn:SIRConvergenceImplicit3}
\end{align}
Substituting the definition of $\mathbf{P}_K$ and rearranging completes the proof.


\section{Proof of Lemma \ref{lemma:PathLossSum}}\label{sec:PathLossSumLemmaProof}

To find the CDF of $ \sum_{i= 1}^K r_{0,i}^{-\alpha}$, we first condition on the event that $r_{0,(K+1)} = R_{K+1}$. Since the BSs are distributed according to a HPPP, the $K$ BSs closest to the test mobile are distributed with uniform probability in the interior of the disk of radius $R_{K+1}$, centered at the origin. Let $\tilde{r}_{01},\tilde{r}_{02}, \cdots, \tilde{r}_{0K}$, be the distances of the $K$ BSs closest to the test mobile, in \emph{random} order.  
Note that $R_{K+1}^\alpha \tilde{r}_{0i}^{-\alpha}$ are Pareto distributed random variables with shape parameter $2/\alpha$. For $\alpha > 2$, the sum of $K$ i.i.d. Pareto distributed random variables has the following CDF  for $x \geq K$,
\begin{align}
 &P\left( \left. \sum_{i= 1}^KR_{K+1}^\alpha\tilde{r}_{0i}^{-\alpha} \leq x\right| R_{K+1}\right)=1\nonumber\\
&\!+\frac{1}{\pi}\sum_{\ell = 1}^K {K \choose \ell} \left(-\Gamma\left(1-\frac{2}{\alpha}\right)\right)^\ell \!\sin \left(\frac{2\pi}{\alpha}\ell\right)\sum_{m = 0}^\infty A_{K-\ell, m}\Gamma\left(m+\frac{2}{\alpha}\ell\right)x ^{-m-\frac{2}{\alpha}\ell}\,, \label{Eqn:CDFAvgPathLossCond1}
\end{align}
and zero otherwise \cite{blum1970sums}.
From  \eqref{Eqn:CDFAvgPathLossCond1}, we have
\begin{align}
 &P\left( \left. \sum_{i= 1}^K r_{0i}^{-\alpha} \leq x\right| R_{K+1}\right) = P\left( \left. \sum_{i= 1}^K\tilde{r}_{0i}^{-\alpha} \leq x\right| R_{K+1}\right) = P\left( \left. \sum_{i= 1}^KR_{K+1}^{\alpha}\tilde{r}_{0i}^{-\alpha} \leq x R_{K+1}^{\alpha}\right| R_{K+1}\right)\nonumber\\
&=1+\frac{1}{\pi}\sum_{\ell = 1}^K \!{K \choose \ell} \!\!\left(-\Gamma\left(1-\frac{2}{\alpha}\right)\right)^\ell \!\sin \left(\frac{2\pi}{\alpha}\ell\right)\sum_{m = 0}^\infty A_{K-\ell, m}\Gamma\!\left(m+\frac{2}{\alpha}\ell\right)\!\!\left(xR_{K+1}^{\alpha}\right) ^{-m-\frac{2}{\alpha}\ell}\, \label{Eqn:CDFAvgPathLossCond2}
\end{align}
for $xR_{K+1}^{\alpha} \geq K$ and zero otherwise. Since $R_{K+1}$ is the distance of the $(K+1)$st nearest neigbor in a HPPP, its PDF is known (see e.g. \cite{Mathai}). To remove the conditioning on $R_{K+1}$ in  \eqref{Eqn:CDFAvgPathLossCond2} with respect to $R_{K+1}$, we first take the expectation of the term $R_{K+1} ^{-m\alpha -2\ell}\Ind{xR_{K+1}^\alpha \geq K}$,
\begin{align}
E\left[R_{K+1} ^{-m\alpha -2\ell}\, \Ind{xR_{K+1}^\alpha \geq K}\right] &= \int_{\left(\frac{K}{x}\right)^{\frac{1}{\alpha}}}^\infty R_{K+1} ^{-m\alpha -2\ell} \frac{2(\lambda_b \, \pi R_{K+1}^2)^{K+1}}{ R_{K+1} \, K!}e^{-\lambda_b \, \pi R_{K+1}^2}\, dR_{K+1}
\end{align}
Making the substitution $q = R_{K+1}^2\lambda_b\pi$, $dq = dR_{K+1} 2R_{K+1}\lambda_b\pi $ and solving yields
\begin{align}
E\left[R_{K+1} ^{-m\alpha -2\ell}\, \Ind{xR_{K+1}^\alpha \geq K}\right] =\frac{\left(\lambda_b\pi\right)^{m\alpha/2+\ell}}{K!}\Gamma\left(K-\frac{m\alpha}{2} - \ell +1, \left(\frac{K}{x}\right)^{\frac{2}{\alpha}}\lambda_b \pi\right)
\end{align}
Additionally, applying  the CDF of $R_{K+1}$  which can be found e.g. in \cite{Mathai}, we have
\begin{align}
E\left[ \Ind{xR_{K+1}^\alpha \geq K}\right] &= e^{-\lambda_b \pi \left(\frac{K}{x}\right)^{2/\alpha}}\sum_{\ell = 0}^K \frac{\left(\lambda_b\pi  \left(\frac{K}{x}\right)^{2/\alpha}\right)^\ell}{\ell !}\,.
\end{align}

Using the previous two expressions to take the expectation  of the conditional CDF given in \eqref{Eqn:CDFAvgPathLossCond2} with respect to $R_{K+1}$, yields \eqref{Eqn:CDFAvgPathLossLemmaStatement}.


\section{Proof of Lemma \ref{lemma:ConvergenceOfQuadraticForm}}\label{Sec:ConvergenceOfQuadraticFormProof}

First, we define $\epsilon[n,\emm]=\mathbf{c}^\dagger[n]\mathbf{\Delta}_n[\emm]\mathbf{\Delta}_n^\dagger[\emm]\mathbf{c}[n]$ and note that:
\begin{align}
\epsilon[n,\emm]&=\|\mathbf{\Delta}_n^\dagger[\emm]\mathbf{c}[n]\|^2 =\sum_{m=n+1}^\emm |\mathbf{h}_m^\dagger \mathbf{c}[n] |^2
\notag \\
\lim_{\emm\to\infty}\epsilon[n,\emm] &= \epsilon[n]\triangleq \sum_{m=n+1}^\infty |\mathbf{h}_m^\dagger \mathbf{c}[n] |^2\label{eqn:epsilon_def}
\end{align}
With $\mathbf{h}_{m,k}^\dagger$ and $\mathbf{c}_k[n]$ denoting the sub-vectors associated with the $k$-th BS, we have
\begin{IEEEeqnarray}{rCl}
|\mathbf{h}_m^\dagger \mathbf{c}[n]|^2=\sum_{k=1}^K |\mathbf{h}_{m,k}^\dagger \mathbf{c}_k[n]|^2.\label{eqn:channel_vec_decomp}
\end{IEEEeqnarray}

Given $\mathbf{c}_k[n]$, the product $\mathbf{h}_{m,k}^\dagger \mathbf{c}_k[n]$ is a complex Gaussian random variable with zero mean and a variance of $  r_{m, k}^{-\alpha} \|\mathbf{c}_k[n]\|^2$ which is statistically independent of any other product in the summations of \eqref{eqn:epsilon_def} and \eqref{eqn:channel_vec_decomp}. Thus, we can define \begin{IEEEeqnarray}{rCl}
q_{m,k}=\frac{\mathbf{h}_{m,k}^\dagger \mathbf{c}_k[n]}{r_{m, k}^{-\alpha/2} \|\mathbf{c}_k[n]\|}
\end{IEEEeqnarray}
and note that given $\mathbf{c}_k[n]$, these are iid random variables with complex Gaussian distribution of zero mean and unit variance. With this notation, \eqref{eqn:epsilon_def} becomes
\begin{IEEEeqnarray}{rCl}
\epsilon[n] = \sum_{m=n+1}^\infty \sum_{k=1}^K \left|r_{m, k}^{-\alpha/2} \|\mathbf{c}_k[n]\|\cdot q_{m,k}\right|^2\,.
\end{IEEEeqnarray}

Since $r_{m,k}>|r_m-R_k|$ whenever $r_m>R_k$, we can bound $\epsilon[n]$ as follows:
\begin{align}
&\epsilon[n] \leq \sum_{m=n+1}^\infty \sum_{k=1}^K \left| (r_m -R_K)^{-\alpha/2} \|\mathbf{c}_k[n]\|\cdot q_{m,k}\right|^21_{\{r_m > R_K\}} \nonumber \\
& \;\;\;\; + \sum_{m=n+1}^\infty \sum_{k=1}^K \left|r_{m, k}^{-\alpha/2} \|\mathbf{c}_k[n]\|\cdot q_{m,k}\right|^21_{\{r_m \leq R_K\}}.
 \label{eqn:EpsilonBound}
\end{align}
The second term in the RHS of \eqref{eqn:EpsilonBound} is non-zero only if $r_{n+1} \leq R_k$. Thus, it converges to zero w.p. 1:
\begin{align}\label{eqn:LimitOfInnerCircleTerm}
\lim_{n\rightarrow\infty}&\Pr\left\{\sum_{m=n+1}^\infty \sum_{k=1}^K \left|r_{m, k}^{-\alpha/2} \|\mathbf{c}_k[n]\|\cdot q_{m,k}\right|^21_{\{r_m \leq R_K\}} > 0\right\} \leq \lim_{n\rightarrow\infty}\Pr(r_{n+1} \leq R_k)= 0.
\end{align}
Turning to the first term in the RHS of \eqref{eqn:EpsilonBound}, we use   \cite{george2013ergodic} that
shows that
\begin{align}
\lim_{n\to\infty}   n^{\alpha/2-1} \sum_{m=n+1}^\infty  (r_m -R_K)^{-\alpha} |q_{m,k}|^21_{\{r_m > R_K\}} = \frac{2 (\pi \lambda)^{\alpha/2}}{(\alpha-2)}.
\end{align}
Thus,
\begin{align}\label{eqn:LimitOfOuterCircleTerm}
\lim_{n\to\infty}   n^{\alpha/2-1} \sum_{m=n+1}^\infty  (r_m -R_K)^{-\alpha} \|\mathbf{c}_k[n]\|^2\cdot |q_{m,k}|^2 1_{\{r_m > R_K\}} =\|\mathbf{c}_k[n]\|^2 \cdot\frac{2 (\pi \lambda)^{\alpha/2}}{(\alpha-2)}
\end{align}

Combining \eqref{eqn:EpsilonBound}, \eqref{eqn:LimitOfInnerCircleTerm} and  \eqref{eqn:LimitOfOuterCircleTerm}, and recalling that $\|\mathbf{c}[n]\|^2 =\sum_{k=1}^K \|\mathbf{c}_k[n]\|^2 $, implies the following in probability
\begin{align}
\lim_{n\to\infty}n^{\alpha/2-1}\frac{\epsilon[n] }{\|\mathbf{c}[n]\|^2}
& \leq  \frac{2 (\pi \lambda)^{\alpha/2}}{(\alpha-2)}\,.\label{eqn:LB_convergence_partial1}
\end{align}
Hence, we have:
\begin{align}\label{eq: almost last limit in lemma}
   \lim_{L\to\infty} L^{-\alpha/2} \epsilon[n] &= \lim_{L\to\infty} L^{1-\alpha}(cK)^{1-\alpha/2} \|\mathbf{c}[n]\|^2 \left(n^{\alpha/2-1}\frac{\epsilon[n] }{\|\mathbf{c}[n]\|^2} \right) \nonumber\\ &\leq \lim_{L\to\infty}L^{1-\alpha}(cK)^{1-\alpha/2} \|\mathbf{c}[n]\|^2\cdot\frac{2 (\pi \lambda)^{\alpha/2}}{(\alpha-2)}.
\end{align}
 Since $ \|\mathbf{c}[n]\|^2 = \mathbf{h}_0^\dagger\left(\mathbf{H}[n]\mathbf{H}^\dagger[n]\right)^{-2} \mathbf{h}_0$,
applying Lemma \ref{lemma:MinEval} we have:
\begin{align}
\lim_{L\to\infty} \frac1L\mathbf{h}_0^\dagger\left(L^{\alpha/2-1}\mathbf{H}[n]\mathbf{H}^\dagger[n]\right)^{-2}\mathbf{h}_0 &\leq\lim_{L\to\infty}
\frac{1}{\left(\gamma_{\text{min}}\left\{L^{\alpha/2-1}\mathbf{H}[n]\mathbf{H}^\dagger[n]\right\}\right)^2}\frac1L\mathbf{h}_0^\dagger\mathbf{h}_0 \nonumber \\
&\le \frac{\sum_{k = 1}^K r_{0,k}^{-\alpha}}{\left(\pi\lambda\right)^{\alpha} \left(1.5-\sqrt{2}\right)^2(2K)^{2-\alpha}}
\label{eqn:QuadraticFormBound}
\end{align}
Substituting the previous expression into \eqref{eq: almost last limit in lemma}:
\begin{align}
& \lim_{L\to\infty}\lim_{\emm\to\infty} L^{-\alpha/2} \mathbf{c}^\dagger[n]\mathbf{\Delta}_n[\emm]\mathbf{\Delta}_n[\emm]^\dagger \mathbf{c}[n] \leq (cK)^{1-\alpha/2} \frac{2 (\pi \lambda)^{\alpha/2}\sum_{k = 1}^K r_{0,k}^{-\alpha}}{\left(\pi\lambda\right)^{\alpha} \left(1.5-\sqrt{2}\right)^2(2K)^{2-\alpha}(\alpha-2)},\notag \end{align}
and since $\alpha > 2$, we have \eqref{Eqn:DenConv}.

\section{Proof of Lemma \ref{lemma:EDF}}\label{sec:EDFDistLemmaProof}
Let $\tilde{r}_i$, $i = 1, 2, \cdots, n$ be the distances from the origin of the $n$ closest mobiles to the origin, in random order, and $\tilde{r}_{i,j}$ be the distance of the corresponding mobile from the $j$-th BS, for $i = 1, 2,\cdots, n$ and $j = 1,2,\cdots, K$.
Because the mobiles are distributed according to a HPPP, conditioned on the distance of the $(n+1)$th mobile from the origin, $r_{n+1}$,  the $n$ mobiles closest to the origin are distributed with uniform probability in a disk of radius $r_{n+1}$ centered at the origin.  Since the $K$ BSs closest to the origin are at most a distance $R_K$ from the origin, conditioned on $r_{n+1}$, and recalling that $n/N = c$,  we have
\begin{align}
\Pr(N^{\alpha/2}\tilde{r}_{i,j}^{-\alpha} \leq \tau) \leq & \Pr\left(N^{\alpha/2}\left(\tilde{r}_{i}+R_K\right)^{-\alpha} \leq \tau\right) =\Pr(\tilde{r}_i \geq \tau^{-1/\alpha} N^{1/2} -R_K)\nonumber \\
 &=1-\frac{\left(\tau^{-\frac{1}{\alpha}}  -\frac{R_K}{\sqrt N}\right)^{2} n}{c\,r_{n+1}^2}1_{\left\{\left(\tau^{-\frac{1}{\alpha}}  -\frac{R_K}{\sqrt N}\right)^{-\alpha} > \left(\frac{n}{c r_{n+1}^2}\right)^{\frac{\alpha}{2}}\right\}} \label{eqn:CDFBeforeLimit}
\end{align}

Since $2\pi\lambda r^2_{n+1}$ is a $\chi^2$ random variable with $2n+2$ degrees of freedom (see e.g. \cite{Mathai} Equation 2.4.4), it can be written as the sum of the squares of $2n+2$ i.i.d. standard normal random variables, whose average equals $\frac{2\pi\lambda r_{n+1}^2}{2n+2}$. Hence, by the strong law of large numbers, we have
\begin{align}
\lim_{n\to\infty} \frac{r_{n+1}^2}{n} = \frac{1}{\pi\lambda}\,,\label{eqn:ConvegenceOfNormChiSquare}
\end{align}
\wpOne. Substituting \eqref{eqn:ConvegenceOfNormChiSquare} into \eqref{eqn:CDFBeforeLimit}, we get the following  \wpOne
\begin{align}
\lim_{N\to\infty}\Pr(N^{\alpha/2}\tilde{r}_{i,j}^{-\alpha} \leq \tau) &\leq 1-\frac{\tau^{-2/\alpha} \pi\lambda}{c}1_{\left\{\left(\frac{\pi\lambda}{c}\right)^{\alpha/2} < \tau\right\}}.\label{eqn:CDFLimitUB}
\end{align}
For a lower bound, we can write
\begin{align}
\Pr(N^{\alpha/2}\tilde{r}_{i,j}^{-\alpha} \leq \tau) &\geq  \Pr\left(\left.N^{\alpha/2}\left(\tilde{r}_{i}-R_K\right)^{-\alpha} \leq \tau \right|\tilde{r}_{i}>R_K \right)\Pr(\tilde{r}_{i}>R_K)\nonumber \\
 &\;\;\;\;\;\;\;\;\;\;\;\;\;\;\;\;\;\;\;\;\;+ \Pr\left(\left.N^{\alpha/2}\tilde{r}_{i,j}^{-\alpha} \leq \tau \right|\tilde{r}_{i}>R_K \right)\Pr(\tilde{r}_{i}\leq R_K)\label{eqn:CDFLB}
\end{align}
Since as $n\to\infty$, $\Pr(\tilde{r}_{i}\leq R_K) \to 0$ \wpOne, the second term on the RHS of \eqref{eqn:CDFLB} goes to zero \wpOne. Hence we only need to consider the first term on the RHS in the limit.
\begin{align}
\Pr\left(\left.N^{\frac{\alpha}{2}}\left(\tilde{r}_{i}-R_K\right)^{-\alpha} \leq \tau \right|\tilde{r}_{i}>R_K \right)  &=  1-\frac{\left(\tau^{-\frac{1}{\alpha}}  +\frac{R_K}{\sqrt N}\right)^{2} n}{c\, r_{n+1}^2}1_{\left\{\left(\tau^{-1/\alpha}  + \frac{R_K}{\sqrt N}\right)^{-\alpha} > \left(\frac{n}{c r_{n+1}^2}\right)^{\frac{\alpha}{2}}\right\}}
\end{align}
Hence, we get the following \wpOne,
\begin{align}
\lim_{N\to\infty}\Pr\left(\left.N^{\alpha/2}\left(\tilde{r}_{i}-R_K\right)^{-\alpha} \leq \tau \right|\tilde{r}_{i}>R_K \right)  \geq 1-\frac{\tau^{-2/\alpha} \pi\lambda}{c}1_{\left\{\left(\frac{\pi\lambda}{c}\right)^{\alpha/2} < \tau\right\}}\label{eqn:CDFLimitLB}
\end{align}
Since as $n,N\to\infty$, $\Pr(r_i > R_K) \to 1$, taking the limit of \eqref{eqn:CDFLB}  and substituting \eqref{eqn:CDFLimitLB} yields the following \wpOne.
\begin{align}
\lim_{N\to\infty}\Pr(N^{\alpha/2}\tilde{r}_{i,j}^{-\alpha} \leq \tau) &\geq 1-\frac{\tau^{-2/\alpha} \pi\lambda}{c}1_{\left\{\left(\frac{\pi\lambda}{c}\right)^{\alpha/2} < \tau\right\}}\label{eqn:CDFLimitLBFin}
\end{align}
which together with \eqref{eqn:CDFLimitUB} implies the following \wpOne
\begin{align}
\lim_{N\to\infty}\Pr(N^{\alpha/2}\tilde{r}_{i,j}^{-\alpha} \leq \tau) &= 1-\frac{\tau^{-2/\alpha} \pi\lambda}{c}1_{\left\{\left(\frac{\pi\lambda}{c}\right)^{\alpha/2} < \tau\right\}}\triangleq F_r(\tau)\,.\label{eqn:CDFLimitFin}
\end{align}
By Proposition 5.25 of \cite{Karr}, we  have the following \wpOne, for some joint CDF $F_{\mathbf{r}}(\tau_1, \cdots, \tau_K)$,
\begin{align}
\lim_{N\to\infty} \Pr(N^{\alpha/2}\tilde{r}_{i,j}^{-\alpha} \leq \tau_1, \cdots, N^{\alpha/2}\tilde{r}_{i,j}^{-\alpha} \leq \tau_K )  = F_{\mathbf{r}}(\tau_1, \cdots, \tau_K). \label{Eqn:CDFLimVec}
\end{align}

Since the joint e.d.f. of the $\tilde{r}_{i,j}^{-\alpha}$ i.e. $H_N(\tau_1, \cdots, \tau_K)$ converges \wpOne to $  \Pr(N^{\alpha/2}\tilde{r}_{i,j}^{-\alpha} \leq \tau_1, \cdots, N^{\alpha/2}\tilde{r}_{i,j}^{-\alpha} \leq \tau_K )$  by the strong law of large numbers, combined with  \eqref{Eqn:CDFLimVec}, yields
\begin{align}
\lim_{N\to\infty} H_N(\tau_1, \cdots, \tau_K) = F_{\mathbf{r}}(\tau_1, \cdots, \tau_K) = H(\tau_1, \cdots, \tau_K) \;\;{w .p. 1}. \label{Eqn:JointEDFConvFin}
\end{align}

Thus, the first part of the lemma is proved. For the second part, consider the following:
\begin{align}
&\Pr\left\{N^{\frac{\alpha}{2}}r_{i,1}^{-\alpha} + \cdots + N^{\frac{\alpha}{2}}r_{i,K}^{-\alpha}  \leq K\tau | r_{i} > R_K \right\} =\Pr\left\{r_{i,1}^{-\alpha} + \cdots + r_{i,K}^{-\alpha} \leq K\tau N^{\frac{-\alpha}{2}} | r_{i} > R_K \right\}\nonumber \\
&\geq \Pr\left\{(r_{i}-R_K)^{-\alpha} + \cdots + (r_{i}-R_K)^{-\alpha} \leq K\tau N^{\frac{-\alpha}{2}} | r_{i} > R_K \right\}\nonumber \\
&= \Pr\left\{(r_{i}-R_K)^{-\alpha} \leq \tau N^{\frac{-\alpha}{2}} | r_{i} > R_K \right\}\
\end{align}
Following the steps used to derive \eqref{eqn:CDFLimitUB}, we get the following \wpOne.
\begin{align}
&\lim_{N\to\infty} \Pr\left\{N^{\frac{\alpha}{2}}r_{i,1}^{-\alpha} + \cdots + N^{\frac{\alpha}{2}}r_{i,K}^{-\alpha}\leq K\tau  | r_{i} > R_K  \right\} \geq  1-\frac{\tau^{-2/\alpha} \pi\lambda}{c}1_{\left\{\left(\frac{\pi\lambda}{c}\right)^{\alpha/2} < \tau\right\}}
\end{align}
Using similar steps we can get
\begin{align}
&\Pr\left\{N^{\frac{\alpha}{2}}r_{i,1}^{-\alpha} + \cdots + N^{\frac{\alpha}{2}}r_{i,K}^{-\alpha}  \leq K\tau  | r_{i} > R_K\right\} \leq \Pr\left\{(r_{i}+R_K)^{-\alpha} \leq \tau N^{\frac{-\alpha}{2}} | r_{i} > R_K \right\}
\end{align}
Following the steps used to derive \eqref{eqn:CDFLimitLB}, we get the following \wpOne.
\begin{align}
&\lim_{N\to\infty} \Pr\left\{N^{\frac{\alpha}{2}}r_{i,1}^{-\alpha} + \cdots + N^{\frac{\alpha}{2}}r_{i,K}^{-\alpha}\leq K\tau  | r_{i} > R_K  \right\} \leq  1-\frac{\tau^{-2/\alpha} \pi\lambda}{c}1_{\left\{\left(\frac{\pi\lambda}{c}\right)^{\alpha/2} < \tau\right\}}
\end{align}
Since as $n,N\to\infty$, $r_{i} > R_K$ \wpOne, we conclude the following \wpOne.
\begin{align*}
&\lim_{N\to\infty} \Pr\left\{\frac1K \left(N^{\frac{\alpha}{2}}r_{i,1}^{-\alpha} + \cdots + N^{\frac{\alpha}{2}}r_{i,K}^{-\alpha}\right)\leq \tau  | r_{i} > R_K  \right\} = 1-\frac{\tau^{-2/\alpha} \pi\lambda}{c}1_{\left\{\left(\frac{\pi\lambda}{c}\right)^{\alpha/2} < \tau\right\}} = G(\tau).
\end{align*}
Using the approach used to prove \eqref{Eqn:JointEDFConvFin}, we find that the e.d.f. of $\frac1K \left(N^{\frac{\alpha}{2}}\!r_{i,1}^{-\alpha} + \cdots + N^{\frac{\alpha}{2}}\!r_{i,K}^{-\alpha}\right)$ converges to $G(\tau)$ \wpOne . Since the joint e.d.f. of $N^{\frac{\alpha}{2}}\!r_{i,1}^{-\alpha},\cdots,N^{\frac{\alpha}{2}}\!r_{i,K}^{-\alpha}$ converges \wpOne to $H(\tau_1, \cdots , \tau_K)$, the CDF of  $\frac1K \left(\tau_1 + \cdots +\tau_K \right)$, where the $\tau_i$ terms have joint CDF $H(\tau_1, \cdots , \tau_K)$, must converge \wpOne to $G(\tau)$.

\end{appendices}

\bibliographystyle{IEEEtran}
\bibliography{IEEEabrv,main}

\end{document}